\begin{document}
%\firstpage{1}

\title{Moment based gene set tests}
\date{April 2014}
%\author[J. L. Larson and A. B. Owen]
%{Jessica L. Larson\,$^{1}$\footnote{to whom correspondence should be addressed}\ 
%\ and Art B. Owen\,$^{2}$  }
%\address{$^{1}$Department of Bioinformatics and Computational Biology, Genentech, Inc.,
%South San Francisco, CA 94080\\
%$^{2}$Dept. of Statistics, Stanford University, Stanford CA 94305}

%\history{Received on XXXXX; revised on XXXXX; accepted on XXXXX}

%\editor{Associate Editor: XXXXXXX}

\author{Jessica L. Larson\\ Genentech Inc.
\and
Art B. Owen\\ Stanford University}

%%%%%%%%%%%%% BEGIN LOCAL DEFINITIONS %%%%%%%%%%%%%%%%%%%%%%%%%%%%

\renewcommand{\le}{\leqslant}
\renewcommand{\ge}{\geqslant}

\newcommand{\wt}{\widetilde}
\newcommand{\wh}{\widehat}

\newcommand{\e}{\mathbb{E}}

\newcommand{\var}{\mathrm{var}}
\newcommand{\cov}{\mathrm{cov}}

\newcommand{\dbeta}{\mathrm{beta}}

\newcommand{\tran}{\mathsf{T}}

\newcommand{\yi}{Y_i}
\newcommand{\wtyi}{\wt Y_i}
\newcommand{\yip}{Y_{i'}}
\newcommand{\wtyip}{\wt Y_{i'}}

\newcommand{\wtyj}{\wt Y_j}
\newcommand{\wtyk}{\wt Y_k}
\newcommand{\wtyl}{\wt Y_\ell}

\newcommand{\pe}{\phantom{=}}
\newcommand{\pn}{\phantom{-}}

\newcommand{\sums}{\sideset{}{^*}\sum}

\renewcommand{\emptyset}{\varnothing}

\newcommand{\dnorm}{\mathcal{N}}
\newcommand{\real}{\mathbb{R}}

\newcommand{\simiid}{\stackrel{\mathrm{iid}}\sim}
\newcommand{\dotsim}{\stackrel{\cdot}\sim}
\newcommand{\phz}{\phantom{0}}

\newtheorem{lemma}{Lemma}
\newtheorem{corollary}{Corollary}

\theoremstyle{definition}
\newtheorem{example}{Example}

%%%%%%%%%%%%% END LOCAL DEFINITIONS %%%%%%%%%%%%%%%%%%%%%%%%%%%%

\maketitle

\begin{abstract}
{\bf Motivation:}
Permutation-based gene set tests are standard approaches for testing relationships 
between collections of related genes and an outcome of interest 
in high throughput expression analyses. Using $M$ random permutations, 
one can attain $p$-values as small as $1/(M+1)$.
When many gene sets are tested, we need smaller
$p$-values, hence larger $M$, to achieve significance while accounting for the number of simultaneous
tests being made.  As a result, the number of permutations to be done
rises along with the cost per permutation.  To reduce this cost, we
seek parametric approximations to the permutation distributions for gene set tests.

{\bf Results:}
We focus on two gene set methods related to sums and sums of squared
$t$ statistics. Our approach calculates exact relevant moments of a 
weighted sum of (squared)
test statistics under permutation.  We find moment-based gene set enrichment 
$p$-values that closely approximate the permutation method $p$-values.
The computational cost of our algorithm for linear statistics
is on the order of doing $|G|$ permutations, where $|G|$ 
is the number of genes in set $G$.
For the quadratic statistics, the cost is on the order of $|G|^2$
permutations which is orders of magnitude faster
than naive permutation.
We applied the permutation approximation method to three 
public Parkinson's Disease expression datasets 
and discovered enriched gene sets not previously discussed. 
In the analysis of these experiments with our method, we are able to 
remove the granularity effects of permutation
analyses and have a substantial computational speedup with little cost to accuracy.

{\bf Availability:}
Methods available as a Bioconductor package, npGSEA (www.bioconductor.org).

{\bf Contact:} {larson.jessica@gene.com} 
%\section{Supplementary Information:} {Supplementary proofs and table are available online.} 
%{owen@stanford.edu}
\end{abstract}

\section{Introduction}
In a genome-wide expression study, researchers often compare the level of gene expression in 
thousands of genes between two treatments groups (e.g., disease, drug, genotype, etc.).  
Many individual genes may trend toward differential expression, but will 
often fail to achieve significance.  This could
happen for a set of genes in a given pathway or system (a gene set).  A number
of significant and related genes taken together can provide
strong evidence of an association between the corresponding gene set and treatment of interest.  
Gene set methods can improve power by looking for small, coordinated expression
changes in a collection of related genes, rather than testing for large shifts in many individual genes.  

Additionally, single gene methods often assume that all genes are independent of each other; this is 
not likely true in real biological systems.  
With known gene sets of interest, researchers can use existing biological knowledge to drive their 
analysis of genome-wide expression data,
%By exploring treatment response for a gene set of interest, 
%we are better able to determine the larger 
%effect of our treatment on gene expression, 
thereby increasing the interpretability of their results.

\cite{moothetal:2003} first introduced gene set enrichment analysis (GSEA) 
and calculated gene set $p$-values based on Kolmogorov-Smirnov statistics.  
Since then, there have been many methodological proposals for GSEA;
no single one is always the best.  For example, some tests
are better for a large number of weakly associated genes,
while others have better power for a small number of strongly
associated genes \citep{newt:quin:fern:denb:seng:ahlq:2007}.

One of the most important differences among gene set methods
is the definition of the null hypothesis.  
Tian {\em et~al.}, 2005 and Goeman and B\"uhlmann, 2007 (among others)
introduce two null hypotheses that differentiate the general approaches for gene set methods. 
The first measures whether a gene set is more strongly related with the outcome of interest 
than a comparably sized gene set.  
Methods of this type typically rely on randomizing the gene
labels to test what is often called the {\em competitive}
null hypothesis.  This is problematic
because genes are inherently correlated (especially those within a set) and permuting them
does not give a rigorous test \citep{goem:buhl:2007}. 

The second type of approach is used to determine whether the genes within a 
set associate more strongly with the outcome of interest than they would by chance, 
had they been independent of the 
outcome.  Methods that test this 
{\em self-contained} null hypothesis usually judge statistical significance by 
randomizing the phenotype with respect to expression data and assume that 
gene sets are fixed.  
While we acknowledge that the {\em competitive} hypothesis is often of interest, 
we focus on methods that test the {\em self-contained} hypothesis in this paper.

A popular {\em self-contained} GSEA method is the JG-score 
 \citep{jian:gent:2007}, which determines the the level of enrichment based on 
averaging linear model statistics.
Recently, \cite{acke:stri:2009} compared $261$ different gene set tests, and 
found particularly good performance from a sum of squared
single gene regression coefficients.  
We extend both the sum and the sum of 
squared linear statistics approaches 
with a new method in this paper. 

All current GSEA methods are based on permutation approaches.  The initial 
GSEA \citep{moothetal:2003} and JG-score \citep{jian:gent:2007} methods both have 
closed form null distributions for their enrichment statistics, 
Gaussian and Kolmogorov-Smirnov, respectively; 
however, even the authors of these methods acknowledge that these distributions do 
not give the correct $p$-values and 
suggest the use of permutation.  
\cite{lehm:roma:2005} give a 
concise explanation of how permutation inference works.  
It is common to approximate the permutation distribution by
a large Monte Carlo sample (Eden and Yates, 1933; David, 2008).

Permutation tests
are simple to program and do not make parametric distributional assumptions.
They also can be applied to almost any statistic we might wish to investigate.  
However, permutation approaches are often computationally expensive, 
are subject to random inference, and fail to 
achieve continuous $p$-values.
Each of these drawbacks is described 
in more depth below.  

We have developed a new gene set enrichment approach 
that approximates the permutation distribution of our corresponding 
test statistics.  We find that our method of moments techniques 
result in almost exactly the same $p$-values as permutation approaches,
but in much less computation time.  Through our approach, we are able to 
obtain refined $p$-values and achieve stringent significance thresholds.  
We applied our approach to three 
public expression analyses, and found disease-associated gene sets not 
previously discovered in these studies.

\section{Methods}\label{sec:methods}
\subsection{The data}
For definiteness, we present our notation using the language of gene expression
experiments.  
Let $g$,  $h$, $r$, and $s$ denote individual genes
and $G$ be a set of genes.
The cardinality of $G$ is denoted $|G|$, or sometimes $p$.
That is the same letter we use for $p$-value, but the usages
are distinct enough that there should be no confusion.
%We use $e$ to denote an experiment and $E$ for a set of experiments.
%Experiment $e$ contains $n_e$ subjects. 
Our experiment has $n$ subjects.
The subjects may represent patients, cell cultures, or tissue samples.

The expression level for gene $g$ in subject $i$ %of experiment $e$
is $X_{gi}$, and $Y_{i}$ is the target variable on subject $i$. % in experiment $e$.
$Y_{i}$ is often a treatment, disease, or other phenotype.
% Changed genotype to phenotype: think of 
We center the variables so that
\begin{align}\label{eq:centered}
%\sum_{i\in e} X_{gie} = \sum_{i\in e}Y_{ie} = 0,\quad\forall g, e.
%\sum_{i=1}^n X_{gi} = \sum_{i=1}^nY_{i} = 0,\quad\forall g.
\sum_{i=1}^nY_{i} = \sum_{i=1}^n X_{gi} = 0,\quad\forall g.
\end{align}
%We assume that experiments are independent of each other.  
% we might assume subjects are independent but that is only for
% raw data. After preprocessing subjects become dependent

The $X_{gi}$ are not necessarily raw expression values, nor are they restricted to
microarray values.
In addition to the centering~\eqref{eq:centered} they could
have been scaled to have a given mean square. The scaling
factor for $X_{gi}$ might even depend on the sample variance
for some genes $h\ne g$ if we thought that shrinking
the variance for gene $j$ towards the others
would yield a more stable test statistic \citep{smyth:2005}.  We might equally use a quantile
transformation, replacing the $j'$th largest
of the raw $X_{gi}$ by $\Phi^{-1}( (j-1/2)/n)$ where
$\Phi$ is the Gaussian cumulative distribution function.
Further preprocessing may be advised to handle
outliers in $X$ or $Y$. We do require that the preprocessing
of the $X$'s does not depend on the $Y$'s and vice versa.
% When $X$ has one or more severe outliers then
% the permutation $p$-value will be strongly affected
% by how those outliers are distributed between test and control
% groups. When $Y$ is a phenotype, then outliers in $Y$
% should also be considered. 
% There is a large literature on 
% outlier detection~\citep{rous:lero:2005}.
% Once detected, the outlier can be moved to the `closest
% non-outlying value' or that data point can be removed.

%To simplify some expressions we will
%drop the index $e$ from our notation in places
%where it is not necessary to specify which experiment is meant.

\subsection{Test statistics}
Our measure of association for gene $g$ on our treatment of interest is %in experiment $e$ is
\begin{align}\label{eq:defbeta}
\hat\beta_{g} = \frac1{n}\sum_{i=1}^nX_{gi}Y_{i}.
\end{align}
If both $X_{gi}$ and $Y_i$ are centered and standardized
to have variance $1$, then $\hat\beta_g=\hat\rho_g$,
the sample correlation between $Y$ and gene $g$.
The usual $t$-statistic for testing a linear relationship
between these variables is $t_g \equiv \sqrt{n-2}\hat\rho_g/(1-\hat\rho_g^2)^{1/2}$,
which is a monotone transformation of $\hat\rho_g$.

For reasons of power and interpretability, we apply gene
set testing methods instead of just testing individual
genes.  Linear and quadratic test statistics have been 
found to be the best performers for
gene set enrichment analyses; we thus consider two statistics for our approach:
\begin{align*}
%\wh T_{G,w,E} & = \sum_{e\in E}\sum_{g\in G}w_{ge}\hat\beta_{ge},\quad\text{and}\\
\wh T_{G,w} & = \sum_{g\in G}w_{g}\hat\beta_{g}\quad\text{and}\quad
%\wh C_{G,w,E} & = \sum_{e\in E}\sum_{g\in G}w_{ge}\hat\beta_{ge}^2.
\wh C_{G,w} = \sum_{g\in G}w_{g}\hat\beta_{g}^2.
\end{align*}

The statistic $\wh T_{G,w}$ can
approximate the JG score of \cite{jian:gent:2007}.
The JG score is
%$\mathrm{JG} = \frac1{\sqrt{|G|}}\sum_{g\in G}t_g$
$(1/\sqrt{|G|})\sum_{g\in G}t_g$.
Taking $w_g = \sqrt{n-2}/(\mathrm{sd}(X_g)\mathrm{sd}(Y))$, 
where $\mathrm{sd}$ denotes a standard deviation,
weights genes similarly to the JG score.  Although $\wh T_{G,w}$ with
these weights sums statistics
equivalent to $t$ statistics, it is not exactly equivalent to the
sum of those statistics because of the way $\hat\rho_g$
appears in the denominator of each $t_g$.

The statistic $\wh C_{G,w}$ is a weighted sum of squared
sample covariances. \cite{acke:stri:2009}
conducted an extensive simulation of gene set methods
and found good results for quadratic combinations of per
gene test statistics.
%We summarize that literature in Section~\ref{sec:genesets}. % summary got moved to intro

The letters $T$ and $C$ are mnemonics for the $t$ and $\chi^2$ distributions
that resemble the permutation distributions of these quantities.
The $w_{g}$ are scalar weights.  For the quadratic statistics
we will suppose that $w_{g}\ge 0$.  We won't need that condition to find
moments of $C_{G,w}$, but because we will compare $C_{G,w}$ to a $\chi^2$ distribution,
it is reasonable to avoid negative weights.  Non-negative weights
are also used to simplify our algorithm.

Although linear and quadratic test statistics are fairly restricted,
they do allow a reasonable amount of customization through the weights
$w_g$, and they are very interpretable compared to more ad hoc statistics.

\subsection{Permutation procedure}

A permutation of $\{1,2,\dots,n\}$ is %an invertible function
%from $\{1,2,\dots,n\}$ to $\{1,2,\dots,n\}$.
a reordering of $\{1,2,\dots,n\}$.
There are $n!$ permutations. We call $\pi$ a
\emph{uniform random permutation} of $\{1,2,\dots,n\}$
if it equals each distinct permutation with probability $1/n!$.

In a permutation analysis, we replace $Y_{i}$ by $\wt Y_{i}$
where $\wt Y_{i} = Y_{\pi(i)}$ for $i=1,\dots,n$.
Then $\wt\beta_{g} = (1/n)\sum_{i=1}^nX_{gi}\wt Y_{i}$, 
and when $\wt Y$ is substituted for $Y$, 
$\wh T_{G,w}$ becomes $\wt T_{G,w}$ and
%$\wh C_{G,w,E}$ becomes $\wt C_{G,w,E}$.
$\wh C_{G,w}$ becomes $\wt C_{G,w}$.
%The permutations used in different experiments are independent of each other.

% The $\prod_{e\in E}n_e!$ permutations of a test statistic such as
% $\wt C_{G,w,E}$ form a reference distribution for
% $\wh C_{G,w,E}$. Using this distribution we may compute $p$-values via
% one of
% \begin{align*}
% p_Q & = \Pr( \wt C\ge \wh C) & p_C & = \Pr( |\wt T|\ge |\wh T|)\\
% p_L & = \Pr( \wt T\le \wh T), \quad\text{or} &p_R & = \Pr( \wt T\ge \wh T).
% \end{align*}
% The latter two are for one sided inferences based on the linear statistic.

%The $\prod_{e\in E}n_e!$ different permutations form a reference distribution
The $n!$ different permutations form a reference distribution
from which we can compute $p$-values.
There are often so many possible permutations that we cannot calculate or use all of them.
Instead, we independently sample uniform random permutations $M$
times, 
getting statistics $\wt C_m=\wt C_{G,w,m}$, and similarly $\wt T_m$,
for $m=1,\dots,M$.
We then compute $p$-values by comparing our observed statistics to our permutation distribution:
\begin{align*}
p_Q & = \frac{\#\{\wt C_m\ge \wh C\}+1}{M+1}
& p_C & = \frac{\#\{|\wt T_m|\ge |\wh T|\}+1}{M+1}\\
p_L & = \frac{\#\{\wt T_m\le \wh T\}+1}{M+1}, 
\quad\text{or} &p_R & = \frac{\#\{\wt T_m\ge \wh T\}+1}{M+1},
\end{align*}
where $p_Q$ and $p_C$ are $p$-values for two-sided inferences on the 
quadratic and linear statistic, respectively, and 
$p_L$ (left) and $p_R$ (right) are for one-sided inferences 
based on the linear statistic.  We use the mnemonic $C$ in $p_C$ 
to denote the central (or two-sided) $p$-value, which corresponds 
to a central confidence interval.
The $+1$ in numerator and denominator of the $p$-values
corresponds to counting the sample test statistic as one
of the permutations.  That is, we automatically include an
identity permutation. 

%The result is nearly exact $p$-values
%for the Monte Carlo procedure.  The possibility of ties among the 
%permuted test statistics makes
%Monte Carlo sampled $p$-values slightly inexact. 
%For instance,
%conditionally on all $R$ permuted test values being distinct, 
%we have $\Pr( p \le r/(R+1))=r/(R+1)$
%for $r=1,\dots,R+1$ where $p$ is one of the count ratios above
%and $X$ and $Y$ are independent.

\subsection{Permutation disadvantages}
There are three main disadvantages to permutation-based analyses: cost,
randomness, and granularity.

Testing many sets of genes 
becomes computationally expensive for two reasons.  
First, there are many test statistics to calculate
in each permuted version of the data. 
Second, to allow for multiplicity adjustment, we
require small nominal $p$-values to draw inference about our sets, 
which in turn requires a large number
of permutations.  That is, to obtain a small adjusted $p$-value (e.g., via 
FDR, FWER, Bonferroni methods), 
one first needs a small enough raw $p$-value.  In order 
to obtain small raw $p$-values, the number of permutations ($M$) must be large, 
thereby increasing computational cost.

Because permutations are based on a random shuffling of the data,
there is a chance that we will obtain a different $p$-value 
for our set of interest each 
time we run our permutation analysis.  That is, our inference is subject to a given 
random seed.

Permutations also have a granularity problem.
If we do $M$ permutations, then the smallest possible
$p$-value we can attain is $1/(M+1)$.  At or below this minimum $p$-value 
permutation tests have no power.
\cite{knij:wess:rein:shmu:2009}
suggest that for a reliable $p$-value, there should be at least $10$ permuted
values more extreme than the sample.
That requires $M\approx10/p$ and when it is necessary,
due to test multiplicity, to use small $p$ such as $10^{-6}$ 
or smaller, the permutation approach becomes computationally expensive.  
We call this the \emph{sample granularity} problem.

There is also a \emph{population granularity} problem. 
In an experiment with $n$ observations, the smallest
possible $p$-value is at least $1/n!$. Sometimes the attainable
minimum is much larger.
For instance, when the target variable $Y$ is
binary with $n/2$ positive and $n/2$ negative values
then the smallest possible $p$-value is 
$1/{n\choose n/2}$. For $n=10$ we necessarily have $p\ge 1/252$.  
Rotation sampling methods such as ROAST are able to get 
around this population granularity problem \citep{wu:lim:vail:asse:visv:smyt:2010}.
Increased Monte Carlo sampling can mitigate the sample granularity
problem but not the population granularity problem.

Another aspect of the granularity problem is that permutations
give us no basis to distinguish between two gene sets that
both have the same $p$-value $1/(M+1)$. There may be many such gene sets,
and they have meaningfully different effect sizes.  Many current approaches 
solve this problem by ranking significantly enriched gene sets by  
their corresponding test statistics.  This practice only works if all test statistics have 
the same null distribution and correlation structure, which is not the case for many 
current GSEA methods.  Additionally, the resulting broken ties do not have a $p$-value 
interpretation and cannot be directly used in multiple testing methods.  To break ties in this way 
also requires the retention of both a $p$-value and a test statistic for inference, rather 
than just one value.

Because of each of these limitations of permutation testing, there is a need to move beyond
permutation-based GSEA methods.  The methods we present below are not as computationally 
expensive, random, or granular as their permutation counterparts.  Our proposal results in a 
single number on the $p$-value scale.

\subsection{Moment based reference distributions}
To avoid the issues discussed above, we approximate the distribution of
the permuted test statistics $\wt T_{G,w}$ by Gaussians
or by rescaled beta distributions.
For quadratic statistics $\wt C_{G,w}$ we use
a distribution of the form $\sigma^2\chi^2_{(\nu)}$ choosing
$\sigma^2$ and $\nu$ to match the second and fourth moments of $\wt C_{G,w}$
under permutation.  

For the Gaussian treatment of $\wt T_{G,w}$ we
find $\sigma^2 = \var(\wt T_{G,w})$ under permutation
using equation~\eqref{eq:varT} of Section~\ref{sec:costs}
and then report the $p$-value
$$
p = \Pr( \dnorm( 0, \sigma^2 ) \le \wh T_{G,w}),
$$
where $\wh T_{G,w}$ is the observed value of the linear statistic.
The above is a left tail $p$-value. Two-tailed and right-tailed $p$
values are analogous.

When we want something sharper than the normal distribution, we can
use a scaled Beta distribution, of the form 
$A + (B-A)\dbeta(\alpha,\beta)$. 
The $\dbeta(\alpha,\beta)$ distribution has 
a continuous density function
%$x^{\alpha-1}(1-x)^{\beta-1}/(\Gamma(\alpha)\Gamma(\beta)/\Gamma(\alpha+\beta))$
on $0<x<1$ for $\alpha,\beta>0$.
We choose $A$, $B$, $\alpha$ and $\beta$ by matching the upper
and lower limits of $\wt T_{G,w}$, as well as its mean and variance.
Using equation~\eqref{eq:varT}
from our theory section we have
\begin{equation}
\begin{split}
A & = \min_{\pi} \frac1n\sum_{i=1}^n \sum_{g\in G}w_gX_{gi} Y_{\pi(i)},\\
B & = \max_{\pi} \frac1n\sum_{i=1}^n \sum_{g\in G}w_gX_{gi} Y_{\pi(i)},\\
\alpha & = \frac{A}{B-A}\Bigl( \frac{AB}{\var(\wt T_{G,w})}+1\Bigr),\quad\text{and}\\
\beta & = \frac{-B}{B-A}\Bigl( \frac{AB}{\var(\wt T_{G,w})}+1\Bigr).
\end{split}
\end{equation}
The observed left-tailed $p$-value is
$$
p = \Pr\Bigl( \dbeta(\alpha,\beta) \le \frac{\wh T_{G,w}-A}{B-A}\Bigr).
$$

It is easy to find the permutations that maximize and minimize
$\wt T_{G,w}$ by sorting the $X$ and $Y$ values appropriately
as described in Section~\ref{sec:costs}.
The result has $A < 0<B$.  For the beta distribution to have
valid parameters we must have $\sigma^2<-AB$.
From the inequality of \cite{bhat:chan:2000}, we
know that $\sigma^2\le -AB$.  There are in fact degenerate
cases with $\sigma^2=-AB$, but in these cases $\wt T_{G,w}$ only
takes one or two distinct values under permutation, and those
cases are not of practical interest.

Like us, \nocite{zhou:wang:wang:2009}
Zhou et al. (2009) have used a beta distribution to approximate a permutation.  
They used the first 4 moments of a Pearson curve for their approach.  
Fitting by moments in the Pearson family, 
it is possible to get a beta distribution whose support set $(A,B)$ does
not even include the observed value.  That is, the
observed value is even more extreme than it would
have to be to get $p=0$; it is almost like getting $p<0$.  
We chose $(A,B)$ based on the upper
and lower limits of $\wt T_{G,w}$ to prevent our observed 
test statistic from falling outside the range of possible values of 
our reference distribution (Section~\ref{sec:costs}).

% We have also implemented a $\sigma t_{(\nu)}$ reference distribution for $\wt T$
% as an alternative to $\dnorm(0,\sigma^2)$. 
% Then the right tail $p$-value is
% $p = \Pr( \sigma t_{(\nu)}\ge \wh T)$.
% For this we need to estimate $\sigma^2$ and $\nu$.
% By equating the second and fourth moments of $\wt T$
% to those of $\sigma t_{(\nu)}$ we arrive at the estimates
% $$
% \nu = 4 + \frac{ 6\var(\wt T)^2}{\e(\wt T^4)-3\var(\wt T)^2},\quad\text{and}\quad
% \sigma^2 = \frac{\nu-2}{\nu}\var(\wt T).
% $$
% Our formula for $\e(\wt T^4)$ is given in 
% equation~\eqref{eq:fourthmomentbg} of Section~\ref{sec:costs}.

For the quadratic test statistic $\wh C_{G,w}$
we use a $\sigma^2\chi^2_{(\nu)}$ reference distribution
reporting the two-tailed $p$-value
$\Pr( \sigma^2\chi^2_{(\nu)}\ge \wh C_{G,w})$
after matching the first and second moments of $\sigma^2\chi^2_{(\nu)}$
to $\e(\wt C_{G,w})$ and $\e( \wt C_{G,w}^2)$ respectively.
The parameter values are
$$
\nu = 2\frac{\e(\wt C_{G,w})^2}{\var(\wt C_{G,w})}
\quad\text{and}\quad \sigma^2=\frac{\e(\wt C_{G,w})}{\nu}
=\frac{\var(\wt C_{G,w})}{2\e(\wt C_{G,w})}.
$$
Our formulas for $\e(\wt C_{G,w})$ and $\e(\wt C_{G,w}^2)$ under permutation
are given in equation~\eqref{eq:targetmoments}
of Section~\ref{sec:lemmas}.
Those formulas use $\e(\wt\beta_g^2)$ 
and $\cov(\wt\beta_g^2,\wt\beta_h^2)$
which we give in Corollaries~\ref{cor:cov} and~\ref{cor:cov2} 
of Section~\ref{sec:lemmas}.

All of our reference
distributions are continuous and unbounded and hence
they avoid the granularity problem of permutation testing.
%For instance the $\chi^2$ approximation leads to $p$-values
%$\Pr( \sigma^2\chi^2_{(\nu)}\ge \wh C)
%=\Pr( \chi^2_{(\nu)}\ge \wh C/\sigma^2)$.
%The moment formulas that we use are given in Section~\ref{sec:theory}.
%Section~\ref{sec:prevapprox} describes some previous approximate
%distributions.
We have prepared a publicly available Bioconductor \citep{gentleman:2004} package, npGSEA, 
which implements our algorithm and calculates the corresponding statistics discussed 
in this section.

\section{Theoretical results}\label{sec:theory}

\subsection{Permutation moments of test statistics}\label{sec:lemmas}

Under permutation, $\e(\wt Y_{i})=0$ by symmetry,
and so  $\e(\wt \beta_{g})=0$ too.
We easily find that,
\begin{equation}\label{eq:targetmoments}
\begin{split}
%\e( \wt T_{G,E,w} ) & = 0,\\
\e( \wt T_{G,w} ) & = 0,\\
\var( \wt T_{G,w} ) & = \sum_{g\in G}\sum_{h\in G} w_{g}w_{h}\cov(\wt\beta_{g},\wt\beta_{h})\\
%\var( \wt T_{G,w} ) & = 
%\sum_{g\in G}\sum_{h\in G} w_{g}w_{h}\cov(\wt\beta_{g},\wt\beta_{h))\\
%\e( \wt C_{G,E,w})
%&= \sum_{e\in E}\sum_{g\in G}w_{ge}\e(\wt\beta_{ge}^2),\quad\text{and}\\
\e( \wt C_{G,w})
&= \sum_{g\in G}w_{g}\e(\wt\beta_{g}^2),\quad\text{and}\\
\var( \wt C_{G,w})
&= \sum_{g\in G}\sum_{h\in G}
w_{g}w_{h}\cov( \wt\beta^2_{g},\wt\beta^2_{h}).
\end{split}
\end{equation}
%where the variance formulas use independence of the experiments in $E$.
The means, variances and covariances in~\eqref{eq:targetmoments}
are taken with respect to the random permutations with the data $X$ and $Y$
held fixed. We adopt the convention that moments of permuted
quantities are taken with respect to the permutation 
and are conditional on the $X$'s and $Y$'s.
This avoids cumbersome
%expressions like $\e(\wt\beta^2_{ge}\mid X_{gie}, Y_{ie}, i\in A_e, g\in G, e\in E)$. 
expressions like $\e(\wt\beta^2_{g}\mid X_{gi}, Y_{i}, g\in G)$. 

We will need the following even moments of $X$ and $Y$:
\begin{align*}
%\mu_2 & = \mu_{2,e} = \frac1n\sum_{i\in A_e}Y_{ie}^2\\
\mu_2 & = \frac1n\sum_{i=1}^n Y_{i}^2,\quad
%\mu_4 & = \mu_{4,e} = \frac1n\sum_{i\in A_e}Y_{ie}^4\\
\mu_4  = \frac1n\sum_{i=1}^n Y_{i}^4,\\
%\bar X_{gh} & = \bar X_{gh,e} = \frac1n\sum_{i\in A_e}X_{gie}X_{hie},\quad\text{and}\\
\bar X_{gh} &= \frac1n\sum_{i=1}^n X_{gi}X_{hi},\quad\text{and}\\
%\bar X_{ghrs} & = \bar X_{ghrs,e} = \frac1n\sum_{i\in A_e}X_{gie}X_{hie}X_{rie}X_{sie}
\bar X_{ghrs} & = \frac1n\sum_{i=1}^n X_{gi}X_{hi}X_{ri}X_{si}
\end{align*}
for $g,h,r,s\in G$. 
Although our derivations involve $O(p^4)$ different moments when the
gene set $G$ has $p$ genes, our computations do not require all of those
moments.

\begin{lemma}\label{lem:moment1}
For an experiment with $n\ge2$ including genes $g$ and $h$,
$$\e(\wt\beta_g\wt\beta_h) = \frac{\mu_2\bar X_{gh}}{n-1}.$$
\end{lemma}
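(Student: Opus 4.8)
The plan is to expand $\wt\beta_g\wt\beta_h$ directly in terms of the permuted responses and reduce the computation to a single expectation, $\e(\wt Y_i\wt Y_j)$, for $i\ne j$ and for $i=j$. Writing $\wt\beta_g = (1/n)\sum_i X_{gi}\wt Y_i$, we have
\begin{align*}
\e(\wt\beta_g\wt\beta_h)
 &= \frac1{n^2}\sum_{i=1}^n\sum_{j=1}^n X_{gi}X_{hj}\,\e(\wt Y_i\wt Y_j).
\end{align*}
So the whole lemma comes down to evaluating two permutation moments of $Y$: the diagonal term $\e(\wt Y_i^2)$ and the off-diagonal term $\e(\wt Y_i\wt Y_j)$ for $i\ne j$.

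For the diagonal term, $\wt Y_i = Y_{\pi(i)}$ is equally likely to equal any of $Y_1,\dots,Y_n$, so $\e(\wt Y_i^2) = (1/n)\sum_{k=1}^n Y_k^2 = \mu_2$. For the off-diagonal term with $i\ne j$, the pair $(\wt Y_i,\wt Y_j) = (Y_{\pi(i)},Y_{\pi(j)})$ is equally likely to be any ordered pair of distinct indices, so
\begin{align*}
\e(\wt Y_i\wt Y_j) &= \frac1{n(n-1)}\sum_{k\ne\ell} Y_kY_\ell
 = \frac1{n(n-1)}\Bigl(\bigl(\sum_k Y_k\bigr)^2 - \sum_k Y_k^2\Bigr)
 = \frac{-n\mu_2}{n(n-1)} = \frac{-\mu_2}{n-1},
\end{align*}
where the centering assumption~\eqref{eq:centered} kills the square of the sum. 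The step to be careful with is this one: everything hinges on $\sum_i Y_i = 0$, and on correctly counting that there are $n(n-1)$ ordered pairs of distinct indices; this is the only place the factor $n-1$ enters, and it is the one spot where a routine slip would change the constant.

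Substituting back,
\begin{align*}
\e(\wt\beta_g\wt\beta_h)
 &= \frac1{n^2}\Bigl( \mu_2\sum_{i=1}^n X_{gi}X_{hi}
   - \frac{\mu_2}{n-1}\sum_{i\ne j} X_{gi}X_{hj}\Bigr).
\end{align*}
Now use $\sum_{i\ne j}X_{gi}X_{hj} = \bigl(\sum_i X_{gi}\bigr)\bigl(\sum_j X_{hj}\bigr) - \sum_i X_{gi}X_{hi} = -\,n\bar X_{gh}$, again by the centering of the $X_{gi}$, and recall $\sum_i X_{gi}X_{hi} = n\bar X_{gh}$. Then
\begin{align*}
\e(\wt\beta_g\wt\beta_h)
 &= \frac{\mu_2}{n^2}\Bigl( n\bar X_{gh} + \frac{n\bar X_{gh}}{n-1}\Bigr)
 = \frac{\mu_2\bar X_{gh}}{n}\cdot\frac{n}{n-1}
 = \frac{\mu_2\bar X_{gh}}{n-1},
\end{align*}
which is the claimed identity. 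The argument is essentially a two-line second-moment computation for exchangeable draws without replacement; there is no real obstacle beyond the bookkeeping of the distinct-index sum, and the $n\ge2$ hypothesis is exactly what makes the $n-1$ denominator meaningful.
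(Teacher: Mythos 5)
Your proof is correct and follows essentially the same route as the paper's: expand $\e(\wt\beta_g\wt\beta_h)$ as a double sum over $X_{gi}X_{hj}\,\e(\wt Y_i\wt Y_j)$, evaluate the diagonal and off-diagonal permutation moments of $Y$ (using the centering of $Y$), and then use the centering of $X$ to collapse the cross terms; the only difference is that you derive $\e(\wt Y_i\wt Y_j)=-\mu_2/(n-1)$ explicitly, which the paper simply states. No gaps.
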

\begin{proof}
See Appendix 1.\quad$\Box$
\end{proof}

\begin{corollary}\label{cor:cov}
For an experiment with $n\ge 2$ including genes $g$ and $h$,
$$\cov(\wt\beta_g,\wt\beta_h)=\mu_2\bar X_{gh}/(n-1).$$
\end{corollary}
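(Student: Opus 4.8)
The plan is to reduce this statement immediately to Lemma~\ref{lem:moment1}. By definition, $\cov(\wt\beta_g,\wt\beta_h) = \e(\wt\beta_g\wt\beta_h) - \e(\wt\beta_g)\e(\wt\beta_h)$, where, following the convention adopted in Section~\ref{sec:lemmas}, all moments are taken over the uniform random permutation $\pi$ with the data $X$ and $Y$ held fixed.

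First I would invoke the fact, already established at the start of Section~\ref{sec:lemmas}, that $\e(\wt Y_i) = \tfrac1n\sum_{j=1}^n Y_j = 0$: this follows from the centering~\eqref{eq:centered} together with the symmetry of a uniform random permutation, which makes $\wt Y_i$ equally likely to equal any $Y_j$. Consequently $\e(\wt\beta_g) = \tfrac1n\sum_{i=1}^n X_{gi}\,\e(\wt Y_i) = 0$ for every gene $g$, and the same holds for $h$. The product-of-means term therefore vanishes and $\cov(\wt\beta_g,\wt\beta_h) = \e(\wt\beta_g\wt\beta_h)$. Then I would simply quote Lemma~\ref{lem:moment1}, which gives $\e(\wt\beta_g\wt\beta_h) = \mu_2\bar X_{gh}/(n-1)$, and the corollary is proved.

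There is essentially no obstacle at this level: all of the combinatorial content — evaluating the pairwise permutation moments $\e(\wt Y_i\wt Y_{i'})$ in the cases $i=i'$ and $i\ne i'$ and assembling them into the double sum defining $\e(\wt\beta_g\wt\beta_h)$ — is carried out inside Lemma~\ref{lem:moment1}, whose proof is deferred to Appendix~1. The only real observation the corollary adds is that centering the $Y_i$ forces the first permutation moments to be zero, so that the covariance coincides with the raw cross-moment supplied by the lemma.
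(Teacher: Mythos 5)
Your proof is correct and follows the same route as the paper: observe that $\e(\wt\beta_g)=\e(\wt\beta_h)=0$ (via the centering of $Y$ and the symmetry of the uniform permutation), so the covariance reduces to the raw cross-moment, and then cite Lemma~\ref{lem:moment1}. You merely spell out the vanishing of the first moments in a bit more detail than the paper does, which is fine.
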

\begin{proof}
This follows from Lemma~\ref{lem:moment1} because $\e(\wt\beta_g)=0$.
\end{proof}

From Corollary~\ref{cor:cov}, we see that the correlation
between permuted test statistics $\wt\beta_g$ and $\wt\beta_h$
is simply the correlation between expression values for genes $g$
and $h$.

\begin{lemma}\label{lem:moment2}
For an experiment with $n\ge4$ including genes $g,h,r,s$,
$$
\e( \wt\beta_g\wt\beta_h\wt\beta_r\wt\beta_s )
=
\begin{pmatrix} \mu_2^2\\\mu_4\end{pmatrix}^\tran 
A^\tran B
\begin{pmatrix}
\bar X^*_{ghrs}
/n^2\\[1ex]
\bar X_{ghrs}/n^3
\end{pmatrix} 
$$
where $\bar X^*_{ghrs}
=\bar X_{gh}\bar X_{rs}+\bar X_{gs}\bar X_{hr}+\bar X_{gr}\bar X_{hs}$,
with $A^\tran$ given by
\begin{align*}
%A^\tran&=
\begin{pmatrix}
0 & 0 & \dfrac{n}{n-1} & \dfrac{-n}{(n-1)(n-2)} & \dfrac{3n}{(n-1)(n-2)(n-3)}\\[2ex]
1 & \dfrac{-1}{n-1} & \dfrac{-1}{n-1} & \dfrac2{(n-1)(n-2)} & \dfrac{-6}{(n-1)(n-2)(n-3)}
\end{pmatrix},
\end{align*}
and
\begin{align*}
B  &= 
\begin{pmatrix}
\pn0 &  \pn1\\
\pn0 & -4\\
\pn1 & -3\\
-2   & \,12\\
\pn1 & -6
\end{pmatrix}.
\end{align*}
\end{lemma}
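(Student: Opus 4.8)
The plan is to compute $\e(\wt\beta_g\wt\beta_h\wt\beta_r\wt\beta_s)$ directly from the definition $\wt\beta_g=(1/n)\sum_i X_{gi}Y_{\pi(i)}$, expanding the fourfold product and taking the expectation over the uniform random permutation $\pi$. Writing the product out gives
\begin{align*}
\e(\wt\beta_g\wt\beta_h\wt\beta_r\wt\beta_s)
&= \frac1{n^4}\sum_{i,j,k,\ell} X_{gi}X_{hj}X_{rk}X_{s\ell}\,
   \e\!\bigl[Y_{\pi(i)}Y_{\pi(j)}Y_{\pi(k)}Y_{\pi(\ell)}\bigr].
\end{align*}
The key observation is that $\e[Y_{\pi(i)}Y_{\pi(j)}Y_{\pi(k)}Y_{\pi(\ell)}]$ depends only on the \emph{partition pattern} of the index tuple $(i,j,k,\ell)$ — i.e.\ which of the four indices coincide — because $\pi$ is exchangeable. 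So the first step is to enumerate the five set-partition patterns of four indices: all four equal; one triple plus a singleton (4 ways, but they collapse by symmetry of the $Y$-moment, though \emph{not} of the $X$-coefficients); two pairs (3 ways: $\{ij,k\ell\}$, $\{ik,j\ell\}$, $\{i\ell,jk\}$); one pair plus two singletons (6 ways); all distinct. For each pattern I would compute the corresponding permutation moment of the $Y$'s — these are the entries of the matrix $B$ (up to bookkeeping): e.g.\ $\e[Y_{\pi(i)}^4]=\mu_4$ for the all-equal case, $\e[Y_{\pi(i)}^2Y_{\pi(j)}^2]=(n\mu_2^2-\mu_4)/(n(n-1))$ for a two-pair case (using $\sum_{a\ne b}Y_a^2Y_b^2 = (\sum Y_a^2)^2-\sum Y_a^4 = n^2\mu_2^2-n\mu_4$ and dividing by $n(n-1)$), and similarly for the pair-plus-singletons and all-distinct cases, each reducing via the centering constraint $\sum_i Y_i=0$ and the power-sum identities to linear combinations of $\mu_2^2$ and $\mu_4$. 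This is exactly why only two independent $Y$-moments survive and why $B$ has two columns.

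The second step is to match each index pattern with the corresponding contraction of the $X$-coefficients. Summing $X_{gi}X_{hj}X_{rk}X_{s\ell}$ over the indices subject to a given coincidence pattern produces sums like $\sum_i X_{gi}X_{hi}X_{ri}X_{si}=n\bar X_{ghrs}$ (all equal), $\sum_{i,j}X_{gi}X_{hi}X_{ri}X_{sj}$ (triple $ghr$ together, $s$ free) — which by $\sum_j X_{sj}=0$ forces $j=i$, collapsing it onto the all-equal term, so the triple-plus-singleton pattern contributes nothing beyond a known multiple of $\bar X_{ghrs}$ — and $\sum_{i\ne j}X_{gi}X_{hi}X_{rj}X_{sj}=n^2\bar X_{gh}\bar X_{rs}-n\bar X_{ghrs}$ (the two-pair pattern, giving rise to $\bar X^*_{ghrs}$ after summing the three pairings). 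The pair-plus-singletons and all-distinct patterns again collapse because a free singleton index $\ell$ satisfies $\sum_\ell X_{s\ell}=0$, so \emph{every} pattern with an unmatched index vanishes. After assembling these, the surviving terms are linear combinations of $\bar X^*_{ghrs}$ and $\bar X_{ghrs}$ with coefficients that are rational functions of $n$ — and, crucially, these coefficients pair up with the $Y$-moment coefficients pattern-by-pattern, which is precisely the product structure $(\mu_2^2,\mu_4)\,A^\tran B\,(\bar X^*_{ghrs}/n^2,\ \bar X_{ghrs}/n^3)^\tran$. I would then verify that collecting terms reproduces the stated matrices $A^\tran$ and $B$.

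The main obstacle is the careful combinatorial bookkeeping of the fourth step: tracking how many index-tuples fall into each refinement of each partition pattern, and making sure the vanishing-by-centering collapses are applied consistently so that the $n$-dependent coefficients come out exactly as in $A^\tran$. In particular one must be careful that the "one pair plus two singletons" and "all distinct" patterns, though they have nonzero $Y$-moments individually, contribute zero to $\e(\wt\beta_g\wt\beta_h\wt\beta_r\wt\beta_s)$ once the $X$-sums are taken — this is what keeps the answer a function of only $\bar X^*_{ghrs}$ and $\bar X_{ghrs}$ rather than lower-order $X$-moments times first powers of $X$-sums. A clean way to organize this is to first establish a general formula for $\e[Y_{\pi(i)}Y_{\pi(j)}Y_{\pi(k)}Y_{\pi(\ell)}]$ as a function of the partition of $\{i,j,k,\ell\}$ (a standard exchangeability computation, which I expect is what Appendix~1 handles for the order-two case in Lemma~\ref{lem:moment1}), and then substitute. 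Sanity checks I would run: setting $g=h=r=s$ must recover $\e(\wt\beta_g^4)$ consistently, and the whole expression must be symmetric under all permutations of $(g,h,r,s)$, which it manifestly is since $\bar X^*_{ghrs}$ and $\bar X_{ghrs}$ both are.
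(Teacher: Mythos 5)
Your overall strategy---expand the fourfold product, classify index tuples $(i,j,k,\ell)$ by their coincidence pattern, reduce the exchangeable $Y$-moments of each pattern to combinations of $\mu_2^2$ and $\mu_4$, and contract the $X$-coefficients pattern by pattern---is exactly the paper's (Appendix 2). However, there is a genuine error in your contraction step: you assert that the one-pair-plus-two-singletons and all-distinct patterns contribute nothing because a free index satisfies $\sum_\ell X_{s\ell}=0$. That cancellation applies only to \emph{unrestricted} sums, whereas your pattern sums run over \emph{distinct} indices, so the free index is summed over $\ell\notin\{i,j,k\}$ and $\sum_{\ell\ne i,j,k}X_{s\ell}=-(X_{si}+X_{sj}+X_{sk})\ne 0$ in general. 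Carrying out the inclusion--exclusion, as the paper does, gives for the one-pair pattern
\begin{align*}
\sum_{\substack{i,j,k\\ \text{distinct}}}X_{gi}X_{hi}X_{rj}X_{sk}
= -n^2\bar X_{gh}\bar X_{rs}+2n\bar X_{ghrs},
\end{align*}
and for the all-distinct pattern $n^2\bigl(\bar X_{gh}\bar X_{rs}+\bar X_{gr}\bar X_{hs}+\bar X_{gs}\bar X_{hr}\bigr)-6n\bar X_{ghrs}$; these nonvanishing contractions, paired with $\mu_{1p}$ and $\mu_\emptyset$, are precisely the rows $(-2,12)$ and $(1,-6)$ of $B$. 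Dropping them, as your plan does, leaves only the first three rows of $B$ and yields a formula that disagrees with the lemma. Note also the internal tension in your write-up: your treatment of the triple-plus-singleton pattern already uses the fact that its restricted sum equals $-n\bar X_{ghrs}\ne 0$, which contradicts the blanket claim that every pattern with an unmatched index vanishes.

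There is a correct version of your intuition, but it requires different bookkeeping: if you expand $\e\bigl(\wt Y_i\wt Y_j\wt Y_k\wt Y_\ell\bigr)$ in \emph{non-exclusive} indicator products ($1_{i=j}1_{k=\ell}$, $1_{i=j}$, a constant term, etc.) rather than exclusive partition indicators, then terms with a genuinely free index do vanish against the centered $X$'s under unrestricted summation, and only the all-equal and complete-pairing contractions survive. The price is that the coefficients in that expansion are no longer the pattern moments $\mu_{1p},\mu_\emptyset,\dots$ but inclusion--exclusion (M\"obius-inverted) combinations of them, and computing those combinations is essentially the same bookkeeping the paper performs; your proposal conflates the two organizations. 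A smaller slip: the two-pair moment should be $\e(\wt Y_i^2\wt Y_j^2)=(n\mu_2^2-\mu_4)/(n-1)$, not $(n\mu_2^2-\mu_4)/(n(n-1))$.
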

\begin{proof}
See Appendix 2.\quad$\Box$
\end{proof}

The expression is complicated, but it is simple
to compute; we need only two moments of $Y$, two
cross-moments of $X$, and the $2\times 2$ matrix $A^\tran B$.
The matrix $A$ depends on the experiment
through $n$. %We use $A_e$ for $A$ with $n=n_e$.
Using Lemma~\ref{lem:moment2} we can obtain
the covariance between $\wt\beta^2_g$ and $\wt\beta^2_h$.

\begin{corollary}\label{cor:cov2}
For an experiment with $n\ge 4$ and genes $g,h$,
\begin{align*}
\cov(\wt\beta^2_g,\wt\beta^2_h) 
%&= 
&=
\begin{pmatrix} \mu_2^2\\\mu_4\end{pmatrix}^\tran 
A^\tran B
\begin{pmatrix}
\bar X^*_{gghh}
/n^2\\[1ex]
\bar X_{gghh}/n^3
\end{pmatrix} -\frac{\mu_2^2}{(n-1)^2}\bar X_{gg}\bar X_{hh},
\end{align*}
where $\bar X^*_{gghh} = \bar X_{gg}\bar X_{hh} + 2\bar X_{gh}^2$
with $A$ and $B$ as given in Lemma~\ref{lem:moment2}.
\end{corollary}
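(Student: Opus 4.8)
The plan is to expand $\cov(\wt\beta_g^2,\wt\beta_h^2)=\e(\wt\beta_g^2\wt\beta_h^2)-\e(\wt\beta_g^2)\e(\wt\beta_h^2)$ and evaluate each of the two pieces using results already in hand.

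First I would obtain the cross term $\e(\wt\beta_g^2\wt\beta_h^2)$ by applying Lemma~\ref{lem:moment2} with its four gene indices specialized to $(g,h,r,s)=(g,g,h,h)$. This amounts to tracking what the two $X$-moment quantities in that lemma become under the substitution. In $\bar X^*_{ghrs}=\bar X_{gh}\bar X_{rs}+\bar X_{gs}\bar X_{hr}+\bar X_{gr}\bar X_{hs}$ the substitution gives $\bar X_{gg}\bar X_{hh}+\bar X_{gh}\bar X_{gh}+\bar X_{gh}\bar X_{gh}=\bar X_{gg}\bar X_{hh}+2\bar X_{gh}^2$, which is exactly the quantity denoted $\bar X^*_{gghh}$ in the statement; and $\bar X_{ghrs}$ becomes $\bar X_{gghh}=(1/n)\sum_{i=1}^n X_{gi}^2X_{hi}^2$. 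Plugging these into the expression of Lemma~\ref{lem:moment2} reproduces the first term of the claimed formula verbatim, with the same matrices $A$ and $B$.

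Next I would compute the product of marginal second moments. Taking $h=g$ in Lemma~\ref{lem:moment1} (equivalently in Corollary~\ref{cor:cov}) gives $\e(\wt\beta_g^2)=\mu_2\bar X_{gg}/(n-1)$, and likewise $\e(\wt\beta_h^2)=\mu_2\bar X_{hh}/(n-1)$, so their product is $\mu_2^2\bar X_{gg}\bar X_{hh}/(n-1)^2$. Subtracting this from the cross term yields the corollary. The hypothesis $n\ge4$ is inherited directly from Lemma~\ref{lem:moment2}, so no additional case analysis is required.

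There is no real obstacle here: the corollary is a straight specialization of Lemma~\ref{lem:moment2} together with Corollary~\ref{cor:cov}. The only point that needs a moment of care is the combinatorial bookkeeping for $\bar X^*$ — one must check that the three unordered pairings of the multiset $\{g,g,h,h\}$ collapse to the two-term form above rather than mishandling the repeated $\bar X_{gh}$ factor. Once that is confirmed the remainder is pure substitution.
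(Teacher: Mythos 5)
Your proposal is correct and follows exactly the paper's own argument: write the covariance as $\e(\wt\beta_g^2\wt\beta_h^2)-\e(\wt\beta_g^2)\e(\wt\beta_h^2)$, apply Lemma~\ref{lem:moment2} with indices $(g,g,h,h)$ to the first term, and Lemma~\ref{lem:moment1} (with $h=g$) to the marginal second moments. Your explicit verification that $\bar X^*_{ghrs}$ specializes to $\bar X_{gg}\bar X_{hh}+2\bar X_{gh}^2$ is the only bookkeeping the paper leaves implicit, and it is done correctly.
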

\begin{proof}
The covariance is
$\e(\wt\beta^2_g\wt\beta^2_h)-\e(\wt\beta^2_g) \e(\wt\beta^2_h)$.
Applying Lemma~\ref{lem:moment2} to the first expectation
and Lemma~\ref{lem:moment1} to the other two yields the result.
\quad$\Box$
\end{proof}

% When $n$ is large we can get a simplying approximation.
% First
% $$
% A^\tran B = 
% \begin{pmatrix}
% 1 & -3\\
% 0 & \pn1
% \end{pmatrix}
% +O\Bigl( \frac1n\Bigr).
% $$
% Next
% $$
% \begin{pmatrix}
% \bar X^*_{gghh}
% /n^2\\[1ex]
% \bar X_{gghh}/n^3
% \end{pmatrix}
% = 
% \frac1{n^2}\begin{pmatrix}
% \bar X^*_{gghh}
% \\[1ex]
% 0
% \end{pmatrix} + O\Bigl(\frac1{n^3}\Bigr).
% $$
% As a result
% \begin{align}\label{eq:approxcov2}
% \cov(\wt\beta_g^2,\wt\beta_h^2)
% &= \frac{\mu_2^2}{n^2}\bigl(\bar X^*_{gghh} -\bar X_{gg}\bar X_{hh}\bigr)
% +O\Bigl(\frac1{n^3}\Bigr)\notag\\
% &= \frac{2\,\mu_2^2}{n^2}\bar X_{gh}^2
% +O\Bigl(\frac1{n^3}\Bigr).
% \end{align}
%By Cauchy-Schwarz
%$$
%\bar X_{gh}^2 \le \sqrt{ \bar X_{gggg}\bar X_{hhhh}}.
%$$

\subsection{Rotation moments of test statistics}
Rotation sampling~\citep{wedd:1975,lang:2005} 
provides an alternative to permutations, and is justified
if either $X$ or $Y$ has a Gaussian distribution.
It is simplest to describe when $Y\sim\dnorm(\mu,\sigma^2 I_n)$
and even simpler for $Y\sim\dnorm(0,\sigma^2 I_n)$.
In the latter case we can replace $Y$ by $\wt Y = QY$
where $Q\in\real^{n\times n}$ is a random orthogonal
matrix (independent of both $X$ and $Y$), and the distribution of
our test statistics is unchanged under the null hypothesis that
$X$ and $Y$ are independent.

Rotation tests work by repeatedly sampling from the uniform distribution
on random orthogonal matrices and recomputing the test statistics using
$\wt Y$ instead of $Y$.  They suffer from sample granularity but not population
granularity because $Q$ has a continuous distribution (for $n\ge 2$).
%The continuous distribution for $Q$ also means that ties among the
%rotated statistics have probability zero, and so the sample $p$-values
%(with $+1$ in numerator and denominator) are exact.

To take account of centering we need to use a rotation test
appropriate for  $Y\sim\dnorm(\mu,\sigma^2 I_n)$.
\cite{lang:2005} does this by choosing rotation matrices that
leave the population mean fixed.  He rotates the data in an $n-1$ dimensional
space orthogonal to the vector $1_n$. To get such a rotation matrix,
he first selects an orthogonal contrast matrix
$W\in\real^{n\times (n-1)}$.  This matrix satisfies $W^\tran W=I_{n-1}$ and
$W^\tran 1_n=0_{n-1}$. Then he generates a uniform random rotation 
$Q^*\in\real^{(n-1)\times (n-1)}$ and delivers $\wt Y=QY$,
where $Q = \frac1n1_n1_n^\tran +WQ^*W^\tran$.
More generally if $Y\sim\dnorm(Z\gamma,\sigma^2 I_n)$, for a linear model
$Z\gamma$, \cite{lang:2005} shows how to rotate $Y$ in the residual space
of this model, leaving the fits unchanged.

\cite{wu:lim:vail:asse:visv:smyt:2010} have implemented rotation sampling for
microarray experiments in their method, ROAST.
They speed up the sampling by generating a random vector instead of
a random matrix.
For some tests, permutations and rotations have the same moments,
and so our approximations are approximations of rotation tests
as much as of permutation tests.

Our rotation method approximation performs very similarly to the permutation
method.  We let $\wt Y = QY$ for $Q = (\frac1n1_n1_n^\tran + WQ^*W^\tran)$
where $Q^*$ is a uniform random $n-1\times n-1$ rotation matrix
and the contrast matrix $W\in\real^{n\times(n-1)}$ 
satisfies $W^\tran 1_n=0_{n-1}$ and $W^\tran W = I_{n-1}$
and then $\wt\beta$, $\wt T$ and $\wt C$ are defined as for permutations,
substituting $\wt Y$ for $Y$.

The variance of the quadratic test statistic
depends on \emph{which} contrast matrix $W$ one chooses, 
and it cannot always match the permutation variance.
This difference disappears asymptotically as $n\to\infty$.  
%Rotations are most valuable compared to permutations when $n$ is small
%and population granularity is more of an issue.

\begin{lemma}\label{lem:rotate1}
For an experiment with $n\ge2$ including genes $g$ and $h$,
the moments $\e(\wt\beta_g)$ and $\e(\wt\beta_g\wt\beta_h)$
are identical to their permutation counterparts,
regardless of the choice for $W$.
\end{lemma}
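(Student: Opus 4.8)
The plan is to push the centered rotation all the way down to a single Haar-random orthogonal matrix acting on an $(n-1)$-dimensional vector, and then invoke two elementary invariance properties of the Haar measure. First I would use the centering constraint $1_n^\tran Y=0$ from~\eqref{eq:centered} to annihilate the rank-one part of $Q$: since $Q=\tfrac1n 1_n1_n^\tran+WQ^*W^\tran$, we get $\wt Y=QY=WQ^*W^\tran Y$. Writing $Y=(Y_1,\dots,Y_n)^\tran$, $X_g=(X_{g1},\dots,X_{gn})^\tran$, $u=W^\tran Y\in\real^{n-1}$, and $a_g=W^\tran X_g\in\real^{n-1}$, and using that $WW^\tran$ is the orthogonal projection onto $1_n^\perp$ (because $W^\tran W=I_{n-1}$ and $W^\tran 1_n=0$) while both $Y$ and every $X_g$ already lie in $1_n^\perp$, I would record the two identities $\|u\|^2=Y^\tran WW^\tran Y=\|Y\|^2=n\mu_2$ and $a_g^\tran a_h=X_g^\tran WW^\tran X_h=X_g^\tran X_h=n\bar X_{gh}$. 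In these coordinates $\wt\beta_g=\tfrac1n a_g^\tran Q^* u$.

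For the first moment, $\e(\wt\beta_g)=\tfrac1n a_g^\tran\,\e(Q^*)\,u$, and $\e(Q^*)=0$ by the symmetry of the Haar measure (left multiplication by $-I_{n-1}$, or by a reflection, fixes the distribution of $Q^*$), so $\e(\wt\beta_g)=0$, which is the permutation value recorded at the start of Section~\ref{sec:lemmas}. For the product, I would write $\e(\wt\beta_g\wt\beta_h)=\tfrac1{n^2}\,a_g^\tran\,\e\!\bigl[Q^*uu^\tran (Q^*)^\tran\bigr]a_h$ and apply the key conjugation identity $\e[Q^*MQ^{*\tran}]=\tfrac{\mathrm{tr}(M)}{n-1}\,I_{n-1}$ for a fixed $M$, which gives $\e[Q^*uu^\tran (Q^*)^\tran]=\tfrac{\|u\|^2}{n-1}I_{n-1}=\tfrac{n\mu_2}{n-1}I_{n-1}$. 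Substituting and using $a_g^\tran a_h=n\bar X_{gh}$ yields $\e(\wt\beta_g\wt\beta_h)=\tfrac{\mu_2}{n(n-1)}a_g^\tran a_h=\tfrac{\mu_2\bar X_{gh}}{n-1}$, exactly the value from Lemma~\ref{lem:moment1}. The only way $W$ entered the computation was through $WW^\tran$, the projection onto $1_n^\perp$, which is the same for every orthogonal contrast matrix, so the claimed independence of the choice of $W$ is immediate.

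The main obstacle is justifying the conjugation identity $\e[Q^*MQ^{*\tran}]=(\mathrm{tr}(M)/(n-1))\,I_{n-1}$. I would prove it by noting that $M\mapsto\e[Q^*MQ^{*\tran}]$ is linear and, by left invariance of the Haar measure, commutes with conjugation by every fixed orthogonal matrix; since the orthogonal group acts irreducibly on $\real^{n-1}$ (the case $n=2$ being trivial), the image must be a scalar multiple of $I_{n-1}$, and taking traces identifies the scalar as $\mathrm{tr}(M)/(n-1)$. Alternatively one can read it off from $\e[Q^*_{ij}Q^*_{k\ell}]=\delta_{ik}\delta_{j\ell}/(n-1)$, i.e. the first two moments of a uniform point on the sphere $S^{n-2}$ (a row of $Q^*$). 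Everything else is bookkeeping with the centering constraints, so I expect the actual write-up to be short.
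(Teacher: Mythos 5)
Your proof is correct and takes essentially the same route as the paper's Appendix 4: pass to contrast coordinates $W^\tran X$, $W^\tran Y$, use that $WW^\tran$ is the projection onto $1_n^\perp$ together with the centering of $X$ and $Y$, and evaluate the first and second Haar moments of the $(n-1)\times(n-1)$ rotation. The only cosmetic difference is that you package the second-moment step as the conjugation identity $\e\bigl[Q^*M{Q^*}^\tran\bigr]=\tr(M)I_{n-1}/(n-1)$ justified by irreducibility, whereas the paper derives the same fact entrywise from the moments of Haar matrix entries in its Appendix 3 (an alternative you also mention), so the two arguments coincide in substance.
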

\begin{proof}
See Appendix 3 and 4.\quad$\Box$
\end{proof}

\begin{corollary}
For an experiment with $n\ge 2$, 
%$\e(\wt T_{G,w,E})$, $\var(\wt T_{G,w,E})$ and $\e( \wt C_{G,w,E})$
$\e(\wt T_{G,w})$, $\var(\wt T_{G,w})$ and $\e( \wt C_{G,w})$
are the same whether $\wt Y$ is formed by permutation or
rotation of $Y$.
\end{corollary}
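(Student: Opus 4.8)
The plan is to reduce the whole statement to Lemma~\ref{lem:rotate1}, since each of the three quantities named in the corollary is, by the identities collected in~\eqref{eq:targetmoments}, assembled purely out of the first and second moments of the individual (permuted or rotated) coefficients $\wt\beta_g$, and those are exactly the moments that Lemma~\ref{lem:rotate1} controls.

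First I would dispose of the mean: by linearity $\e(\wt T_{G,w}) = \sum_{g\in G} w_g\,\e(\wt\beta_g)$, and Lemma~\ref{lem:rotate1} says $\e(\wt\beta_g)$ equals its permutation value, namely $0$, regardless of the choice of contrast matrix $W$; hence $\e(\wt T_{G,w})=0$ under rotation as well. Next, because $\e(\wt\beta_g)=0$ under both schemes, $\cov(\wt\beta_g,\wt\beta_h)=\e(\wt\beta_g\wt\beta_h)$ in each case, so~\eqref{eq:targetmoments} gives
$$
\var(\wt T_{G,w}) = \sum_{g\in G}\sum_{h\in G} w_g w_h\, \e(\wt\beta_g\wt\beta_h),
$$
and Lemma~\ref{lem:rotate1} equates each summand $\e(\wt\beta_g\wt\beta_h)$, including the diagonal terms $\e(\wt\beta_g^2)$ obtained by taking $h=g$, between permutation and rotation; summing preserves the equality. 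Those same diagonal moments are what appear in $\e(\wt C_{G,w}) = \sum_{g\in G} w_g\,\e(\wt\beta_g^2)$ by~\eqref{eq:targetmoments}, so it too is unchanged.

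The only real content is knowing what to leave out. The corollary pointedly does not assert anything about $\var(\wt C_{G,w})$: by~\eqref{eq:targetmoments} that quantity is built from $\cov(\wt\beta^2_g,\wt\beta^2_h)$ and hence from the fourth-order moments $\e(\wt\beta_g\wt\beta_h\wt\beta_r\wt\beta_s)$ of Lemma~\ref{lem:moment2}, which for rotations need not match the permutation values and can even depend on the choice of $W$, as remarked just above. So there is no genuine obstacle here; the ``proof'' is simply the bookkeeping observation that $\e(\wt T_{G,w})$, $\var(\wt T_{G,w})$ and $\e(\wt C_{G,w})$ depend on $\wt Y$ only through the moments covered by Lemma~\ref{lem:rotate1}, followed by an appeal to that lemma.
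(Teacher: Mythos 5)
Your proof is correct and is exactly the argument the paper intends: the corollary is stated as an immediate consequence of Lemma~3, since by~\eqref{eq:targetmoments} the three quantities depend on $\wt Y$ only through $\e(\wt\beta_g)$ and $\e(\wt\beta_g\wt\beta_h)$, which that lemma equates between permutation and rotation. Your remark on why $\var(\wt C_{G,w})$ is excluded also matches the paper's discussion of the fourth-moment dependence on $W$.
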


% Took this out: we say it above Lemma 3 already.
%Higher moments of $\wt\beta$ under rotation do not necessarily match
%their permutation counterparts. Indeed those moments can even depend on
%our choice for the matrix $W$.

\subsection{Computation and costs}\label{sec:costs}

To facilitate computation for the
linear statistic, we reduce each gene set to a single pseudo-gene 
%$X_{Gie} = \sum_{g\in G}w_{ge}X_{gie}$ 
$X_{Gi} = \sum_{g\in G}w_{g}X_{gi}$ 
and then let
%$$\bar X_{Ge} = \frac1{n_e}\sum_{i\in A_e}X_{Gie}\quad\text{and}\quad
$$\bar X_{G} = \frac1{n}\sum_{i=1}^n X_{Gi}\quad\text{and}\quad
%\bar X_{GGe} = \frac1{n_e}\sum_{i\in A_e} X_{Gie}^2.$$
\bar X_{GG} = \frac1{n}\sum_{i=1}^n X_{Gi}^2.$$
The weights $w$ have been absorbed into the pseudo-gene to simplify notation.
%We will also use
%$$
%S_{Gie}=\sum_gw_{ge}X_{gie}^2\quad\text{and}\quad
%\bar S_{Ge}=\frac1n\sum_i S_{Gie}.
%$$
We define
\begin{align*}
%\hat\beta_{Ge} &= \sum_{g\in G}w_{ge} \hat\beta_{ge}
\hat\beta_{G} &= \sum_{g\in G}w_{g} \hat\beta_{g}
%= \frac1n\sum_{i}X_{Gie}Y_i,\quad\text{and}\\
= \frac1n\sum_{i}X_{Gi}Y_i,\quad\text{and}\\
%\wt\beta_{Ge} &= \sum_{g\in G}w_{ge} \wt\beta_{ge}
\wt\beta_{G} &= \sum_{g\in G}w_{g} \wt\beta_{g}
= \frac1n\sum_{i}X_{Gi}\wt Y_i.
\end{align*}

Our permuted linear test statistic
%is $\wt T_{G,E,w} = \sum_{e\in E}\wt\beta_{Ge}$, with
is $\wt T_{G,w} = \wt\beta_{G}$, with
\begin{align}\label{eq:varT}
%\var(\wt T_{G,E,w})
\var(\wt T_{G,w})
%&= \sum_{e\in E}\var( \wt\beta_{Ge})
&= \var( \wt\beta_{G})
% = \sum_{e\in E}\frac{\mu_{2,e}}{n_e-1}\bar X_{GGe}.
 = \frac{\mu_{2}}{n-1}\bar X_{GG}.
\end{align}

For the beta approximation, we need the range of $\wt T_{G,w}$. 
Let the sorted $Y$ values be $Y_{(1)}\le Y_{(2)}\le \dots\le Y_{(n)}$
and the sorted $X_{Gi}$ values be $X_{G(1)}\le X_{G(2)}\le\dots\le X_{G(n)}$.
Then the range of $\wt T_{G,w}$ is $[A,B]$, where
\begin{align*}
A & = \frac1n \sum_{i=1}^n X_{G(i)}Y_{(n+1-i)},\quad\text{and}\quad
B  = \frac1n \sum_{i=1}^n X_{G(i)}Y_{(i)}.
\end{align*}

For a $\sigma t_{(\nu)}$ reference distribution
we would also need
$\e(\wt T_{G,w}^4)=\e(\wt\beta_{G}^4)$.
We can apply Lemma~\ref{lem:moment2}  to
the pseudo-gene resulting in
\begin{align}\label{eq:fourthmomentbg}
% \e(\wt\beta_{ge}^4) = 
% \begin{pmatrix}
% \mu_{2,e}^2\\
% \mu_{4,e}
% \end{pmatrix} A_e^\tran B
% \begin{pmatrix}
% 3\bar X_{GGe}^2/n_e^2\\
% \bar X_{GGGGe}/n_e^3
% \end{pmatrix},
\e(\wt\beta_{G}^4) = 
\begin{pmatrix}
\mu_{2}^2\\
\mu_{4}
\end{pmatrix} A^\tran B
\begin{pmatrix}
3\bar X_{GG}^2/n^2\\
\bar X_{GGGG}/n^3
\end{pmatrix},
\end{align}
where $\bar X_{GGGG} = \frac1{n}\sum_{i=1}^n X_{Gi}^4$. 

We considered using a $\sigma t_{(\nu)}$ reference distribution for $\wt T_{G,w}$,
taking into account the fourth moment of $\wt T_{G,w}$ ~\eqref{eq:fourthmomentbg}.  
We have often (in fact usually) found
that $\e(\wt T_{G,w}^4)<3\e(\wt T_{G,w}^2)^2$; that is, lighter tails than the normal.  
This implies a negative kurtosis
for the permutation distribution, and $t$ distributions have
positive kurtosis.  For this reason we use a beta approximation
and not a $t$ approximation.

For the quadratic statistic we have found it useful
to replace $X_{gi}$ by $\sqrt{w_g}X_{gi}$ in precomputation.
That step is only valid for non-negative $w_g$, but those
are the ones of most interest.
Then we use formulas for $\e(\wt C_{G,w})$ and
$\var(\wt C_{G,w})$ with all $w_g=w_h=1$ ~\eqref{eq:targetmoments}.

Now we consider the computational cost.  
%Let $N=\sum_en_e$ be the total number of arrays in all experiments and $p=|G|$.
The cost to compute all of the $X_{Gi}$ is dominated by
$np$ multiplications.  It then takes $n$ more multiplications
to get $\hat\beta_{G}$ and another $n$ to get $\bar X_{GGe}$.
It costs $n$ multiplications to get all the $\mu_{2}$, except
that step done once can be used for all gene sets.
The cost for the Gaussian
approximation $\dnorm(0,\var(\wt T_{G,w}))$ is dominated by $n(p+2)$ multiplications. 

For the beta approximation there is also a cost proportional to $n\log(n)$
in the sorting to compute limits $A$ and $B$. That adds a cost comparable
to a multiple of $\log(n)$ permutations. We judge that the cost of sorting
is usually minor for $n$ and $p$ of interest in bioinformatics.
%There are also another $|E|<N$ multiplications to compute
%$\var(\wt T_{G,w,E})$ from $\bar X_{GGe}$.  
%Thus, the dominant cost is between $N(p+3)$ and $N(p+4)$, with the
%lower number being more reasonable.

A permutation analysis requires $nM$ multiplications,
%after computing $X_{Gie}$, for a total of $N(R+p)$.
after computing $X_{Gi}$, for a total of $n(M+p)$.
%As a result permutations cost about $(R+p)/(3+p)$ times
%as much as the moment method, for a linear statistic.
It is very common for $p$ to be a few tens and $M$ to be many thousands
or more.   Then we can simplify the costs to $n(M+p)\approx nM$
and $n(2+p)\approx np$.
The moment method costs about as much
as doing $p$ permutations.  When the gene set has tens of
genes and the permutation method uses many thousands 
or even several million permutations, the computational cost is quite large.

The pseudo-gene technique is more expensive for the quadratic
statistics. The dominant cost in
%computing $\hat C_{G,w,E}$ is still the $Np$ multiplications
computing $\wh C_{G,w}$ is still the $np$ multiplications
%required to compate $\wh \beta_{ge}$ for $g\in G$ and $e\in E$.
required to compute $\wh \beta_{g}$ for $g\in G$. % and $e\in E$.
We can also compute $\e( \wt C_{G,w})$ in about this amount
of work.  

The cost of computing $\var(\wt C_{G,w})$ by
a straightforward algorithm is
at least $np^2$, because we need  
$\bar X_{gh}$  and $\bar X_{gghh}$ for all $g,h\in G$.
Some parts of that computation can be sped up
to $O(np)$ by rewriting the expression as
described in Appendix 5. 
One of the terms however does not reduce to $O(np)$.
A straightforward implementation
costs $O(np^2)$ while an alternative expression costs $O(n^2p)$.
The latter is valuable in settings where the gene sets are
large compared to the sample size.
In the former case, the moment approximation has
cost comparable to $O(p^2)$ permutations.  If $n<p$
then the latter case is like $np$ permutations,
so the quadratic cost is comparable to on
the order of $p*\min(n,p)$ permutations.

We have verified our moment formulas by finding that
they match values found by enumerating all $n!$ permutations,
for some simulated data sets with small $n$.
During testing, we also compared permutation and our approximate
$p$-values on simulated data. We saw a close match but 
think that an illustration on real data is more compelling.
Section~\ref{sec:numerical} makes comparisons
using three genome wide expression studies in Parkinson's Disease (PD) patients:
\cite{moran2006whole}, \cite{scherzer} and \cite{zhang2}.

%\end{methods}

\section{Parkinson's Disease}\label{sec:numerical}
We illustrate our method using publicly available data from three expression studies
in Parkinson's Disease (PD) patients (Moran {\em et~al.}, 2006, 
Zhang {\em et~al.}, 2005, and Scherzer {\em et~al.}, 2007; Table~\ref{tab:expts}).
All three experiments contain genome wide expression values measured via 
a microarray experiment.  PD is a common neurodegenerative
disease; clinical symptoms often include rigidity, resting tremor and gait instability
\citep{abousleiman:2006}.  Pathologically, PD is characterized by neuronal-loss 
in the substantia nigra and the presence of $\alpha$-synuclein protein 
aggregates in neurons \citep{abousleiman:2006}.

\begin{table}[!t]
\centering
\begin{tabular}{lccc}\toprule
Reference & Tissue & \# Affected & \# Controls \\\midrule
Moran & Substantia nigra & 29 & 14\\
Zhang & Substantia nigra & 18 & 11 \\
Scherzer & Blood & 47 & 21 \\
\bottomrule
\end{tabular}
\caption{
Three data sets used for non-permutation GSEA\label{tab:expts}}
\end{table}

Using a selected set from the Broad Institute's mSigDB v3.1 \citep{subramanian:2005} 
and the presence
of PD as a response variable from the \cite{zhang2} dataset,  we  
visualized both permutation distributions and our approximation of 
these distributions (Figure~\ref{fig:mnexamp}).  
As discussed above, we use a linear test statistic, 
$\wh T_{G,w}=\sum_{g\in G}\hat\beta_g$, 
and a quadratic test statistic, 
$\wh C_{G,w} = \sum_{g\in G}\hat\beta_g^2$,
where $\hat\beta_g$ is a sample covariance between gene
expression and, in this case, disease status.
Figure~\ref{fig:mnexamp} shows these two test statistics
with a histogram of $99{,}999$ recomputations of those
statistics for permutations of treatment status versus gene
expression.   In principle, 
histograms of permuted test statistics
can be very complicated,
but in practice, they often resemble familiar parametric
distributions, as in Figure~\ref{fig:mnexamp}.

%figure1
\begin{figure}[!tpb]
%\centerline{\includegraphics{fig01.eps}}
\centerline{\includegraphics[width=\hsize]{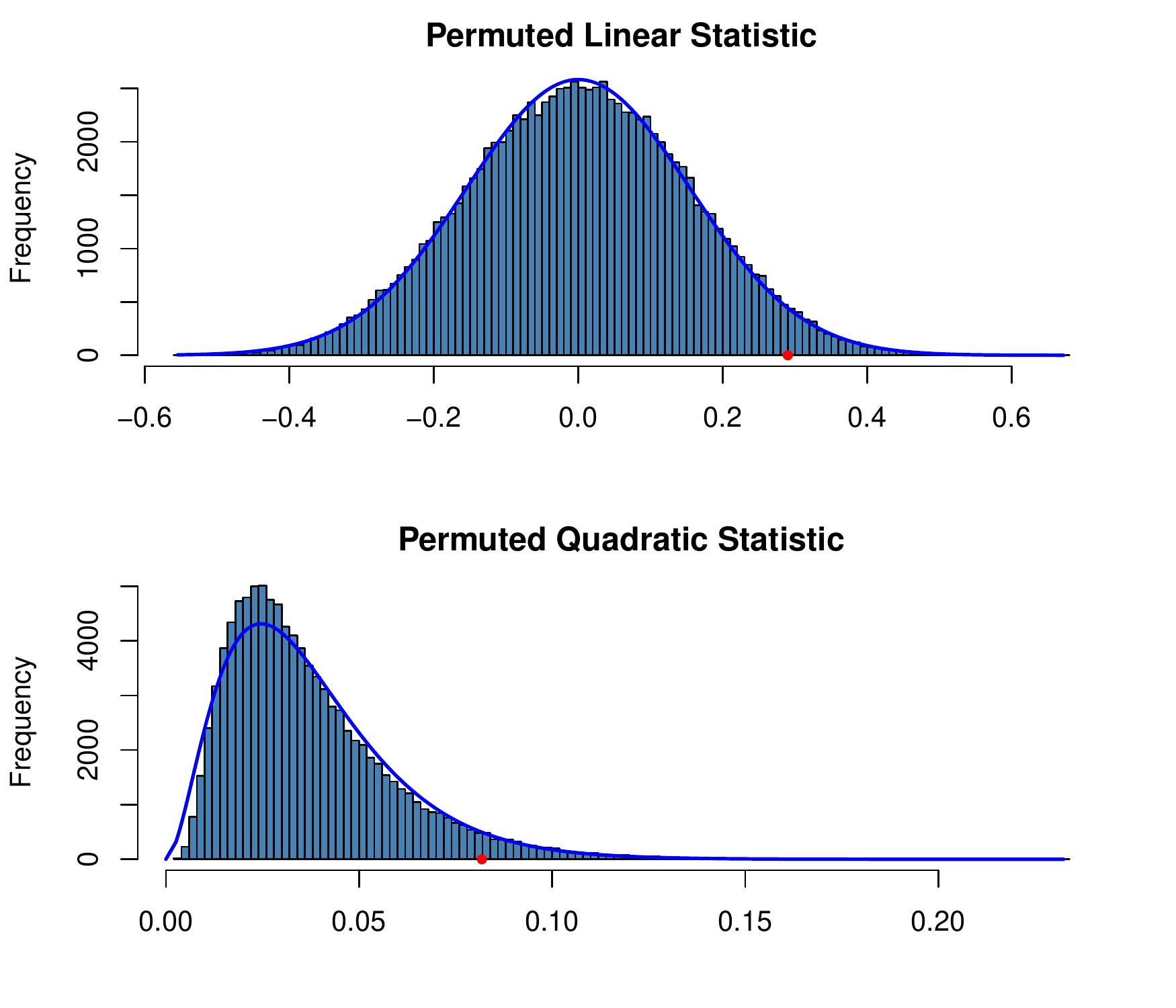}}
%\centerline{\includegraphics[width=\hsize]{figquadstat7}}
\caption{
Top panel shows a permutation histogram for a 
linear test statistic for the the steroid hormone signaling
pathway gene set as described in the text.
The bottom panel shows a quadratic test statistic.
Solid red dots indicate the observed values and curves
indicate parametric fits, based on normal and $\chi^2$
distributions.
}\label{fig:mnexamp}
\end{figure}
%%End figure 1

Using the fitted normal distribution to determine the rarity of the observed 
gene set statistic results in a two-tailed $p$-value of
$0.0604$ for the linear statistic while permutations
yield $p = 0.0595$. A fitted $\sigma^2\chi^2_{(\nu)}$ distribution
results in $p= 0.0425$ for the sum of squares gene set statistic, while permutations
yield $p = 0.0458$. The $p$-values are a quite close despite the
somewhat higher peak for the permutation histogram relative
to the $\chi^2$ density.

We compared our non-permutation $p$-values to $p$-values
for linear and quadratic statistics for the $6{,}303$ gene sets from 
mSigDB's curated gene sets and Gene Ontology 
(GO, Ashburner {\em et~al.}, 2000) gene sets collections (v3.1).  
One gene set was removed because it contained only one gene
in our experiments.  The average size of these gene sets is $79.40$ genes.  
For our gold standard we ran $999{,}999$ permutations
of the linear statistic and $499{,}999$ permutations
of the quadratic statistic.  For all of our permutations, we first calculated the 
observed test statistic for each of the $6{,}303$ gene sets and then permuted the 
$Y_i$'s $M$ times to obtain $6{,}303$ $\times$ $M$ permuted test statistics.  We next 
compared the pre-computed test statistic vector to our matrix of 
permuted test statistics.

For each set, we computed left-sided $p$-values, $p_L$, for the linear
statistic and two-sided $p$-values, $p_Q$, for the quadratic
statistic using these permutations.  We also computed the normal
and beta approximations of $p_L$ with our method.  (Figure~\ref{fig:corr}, left panel).    
We converted these one-sided $p$-values to two-sided $p$-values via 
$p=2\min(p_L, 1-p_L)$.  The beta approximation $p$-values 
are almost identical to the permutation $p$-values.  

%%Figure 2
\begin{figure}[!tpb]
%\centerline{\includegraphics[width=0.9\hsize]{figallpvals}}
\centerline{\includegraphics[width=0.9\hsize]{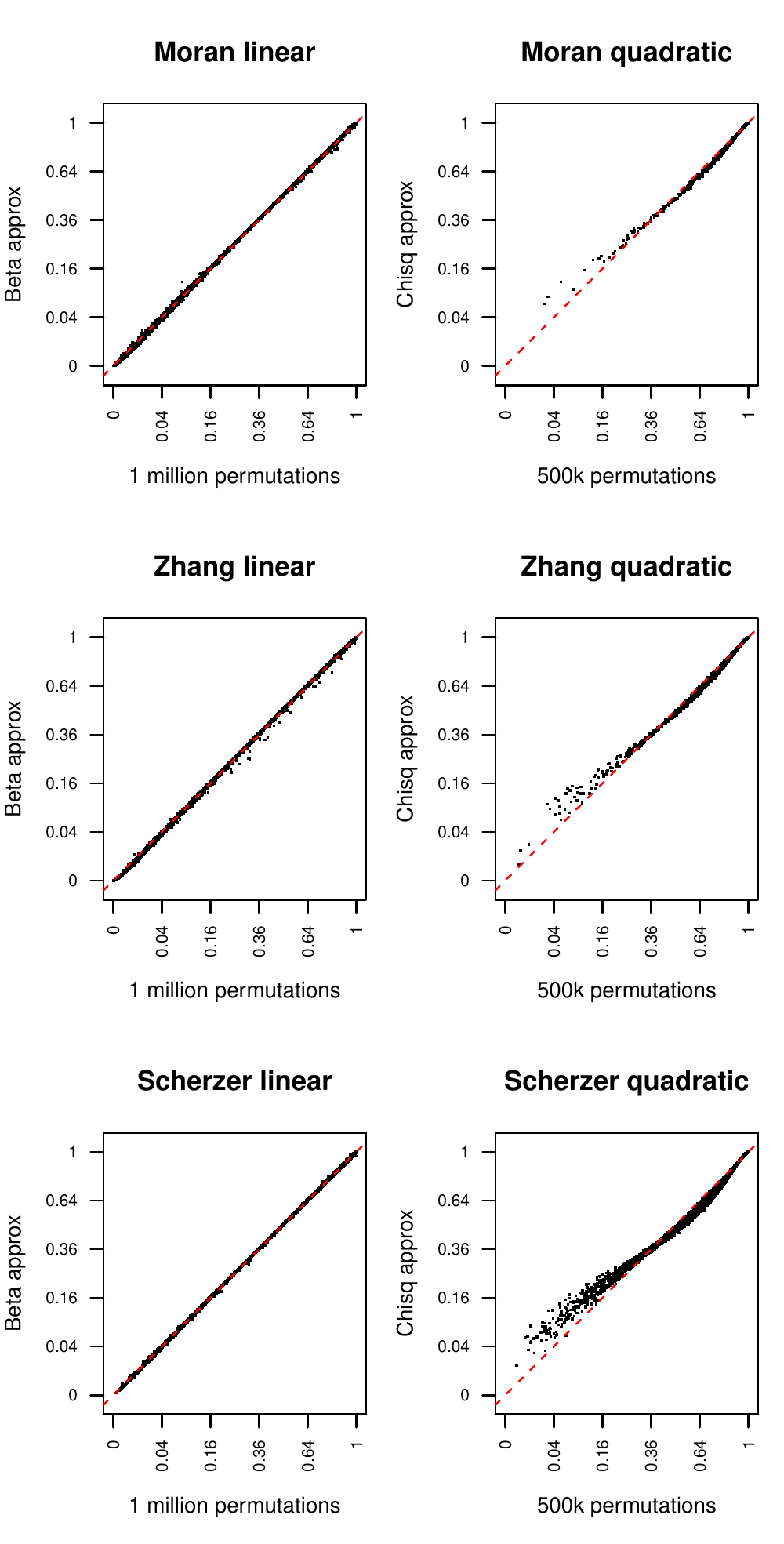}}
\caption{
Permutation $p$-values (x-axis) versus moment-based $p$-values (y-axis) 
for $6{,}303$ gene sets. The left column represents results for a 
linear test statistic, the right column for sum of squares. 
Data come from three genome-wide expression studies.  We applied the 
non-linear transformation $p^{1/2}$ to 
stretch the lower range of these distributions for a more informative visual.  
Red dotted line represents the line $y=x$.  
}\label{fig:corr}
\end{figure}
%%%end Figure 2

For our quadratic test statistic, we fit our
moment based $\sigma^2\chi^2_{(\nu)}$ approximation and computed
two-sided tailed $p$-values across all sets (Figure~\ref{fig:corr}, right panel).  
We see that the smallest $\chi^2$ non-permutation $p$-values
are slightly conservative. This may reflect the boundedness of
the permutation distribution combined with the unbounded 
right tail of the $\chi^2$ distribution.

In each of the three experiments, there is a tight correlation between 
the permutation-based $p$-values of all sets and both of our moment-based
methods (Table~\ref{tab:corrs}).  The beta and normal approximations are almost identical.
Our beta approximations are slightly closer to the gold standard 
than the normal approximations, but not by
a practically important amount. The beta approximation has
shorter tails than the Gaussian approximation. 
It yielded $p$-values somewhat
smaller than permutations did, while the Gaussian approximation
yielded $p$-values somewhat larger than the permutations did.
The $\chi^2$ approximations
also reproduce the ranking of the gold standard quite well,
though not as well as the normal and beta approximations to the linear statistic.  

%%table 2
\begin{table}[!t]
\centering
\begin{tabular}{lccccc}\toprule
%Reference& Normal L&  Beta L& Normal C & Beta C &  Chisq\\\midrule
%Moran    & 0.99731 & 0.99735 & 0.98167 &0.98178  & 0.97840\\
%Zhang & 0.99821 & 0.99831  & 0.99460 &0.99484  & 0.98950\\
%Scherzer & 0.99985 & 0.99987 & 0.99947 &0.99954  & 0.99372\\\botrule
Reference&  Normal $p_L$&  Beta $p_L$&  Normal $p_C$&  Beta $p_C$&  Chisq $p_Q$\\ \midrule
%Moran   & 0.9999139& 0.9999718& 0.9997347& 0.9999050& 0.9784015\\
Moran   & 0.99991& 0.99997& 0.99973& 0.99991& 0.978\\
%Zhang   & 0.9999606& 0.9999776& 0.9998367& 0.9999073& 0.9895026\\
Zhang   & 0.99996& 0.99997& 0.99983& 0.99991& 0.990\\
%Scherzer& 0.9999804& 0.9999911& 0.9999163& 0.9999671& 0.9937179\\ \botrule
Scherzer& 0.99998& 0.99999& 0.99991& 0.99997& 0.994\\ \bottomrule
\end{tabular}
\caption{Spearman correlations between gold standard 
($999{,}999$ and $499{,}999$ permutations for linear and quadratic statistics) 
and approximation $p$-values.  $p_L$ and $p_C$ represent results 
for one and two-tailed linear test statistics,
respectively. Chisq $p_Q$ represents results for the sum of squares analysis.\label{tab:corrs}}
\end{table}
%%End table2

For these data sets and $6{,}303$ gene sets, both of 
the linear statistics, 
which have more or less 
the same rank-ordering of $p$-values as $999{,}999$ permutations, 
could be approximated in about than the amount of time
it takes to compute $100$ permutations (Table~\ref{tab:timelin}, top block).
Our gene sets had an average size of about 80 genes. This lead
us to expect that the cost of the linear approximation
would be comparable to doing 80 permutations.  We
found that the Gaussian approximation cost about as
much as 100 permutations.  While this is a close match,
we remark that the time to do $M$ permutations is nearly
an affine function $a + bM$ with positive intercept $a$. At
such small $M$ the overhead costs dominated the total
cost making the per permutation costs hard to resolve.
The beta approximation was slightly slower than the
Gaussian one because it involves the sorting of the data.

%%table 3
\begin{table}
\centering
\begin{tabular}{llll}
\toprule
     Method& Moran& Zhang& Scherzer\\
\midrule
  $M=100$ &31.03& 29.84&    34.71\\
  $M=500$ &31.95& 32.49&    35.54\\
  %$M=1{,}000$ &35.33& 35.01&    40.57\\
  %$M=2{,}000$ &39.31 &34.12 &   43.80\\
  $M=1{,}000{,}000$ &5010.17 &4434.77 &   3933.15\\
  Normal &29.74& 27.00 &   34.66\\
  Beta &30.79 &31.88 &   37.89\\
\midrule
   $M=30{,}000$ & 9146.27 & 7217.59   & 11808.02\\
   $M=40{,}000$ &12256.54 & 9636.06   & 16545.60\\
   $M=50{,}000$& 16833.08 &12564.06   & 21480.80\\
    $M=500{,}000$ &149588.37 &129667.73 &   187067.91\\
   $\chi^2$ &11020.62 &10600.82    &12677.15\\
\bottomrule
\end{tabular}
\caption{
Time in seconds for $p$-value calculations for $6{,}303$ gene sets in three 
genome-wide expression studies.  Linear statistic results with 
$M=100$,  $M=500$, and $M=1{,}000{,}000$ permutations, 
and the normal and beta approximations
are in the top block.  Timings for the quadratic statistic with 
$M=30{,}000$, $M=40{,}000$, $M=50{,}000$, and $M=500{,}000$ permutations, 
and the $\chi^2$ approximation
are presented in the bottom block.\label{tab:timelin} }
\end{table}
%%end table 3

The $\chi^2$ approximation to the quadratic statistic 
has a computational cost about as much as $35{,}000$ to $45{,}000$ 
permutations, yet has a similar rank-ordering of $p$-values
$499{,}999$ permutations (Table~\ref{tab:timelin}, bottom block).
%This is slower than we would anticipate given that
%the average gene set size was less than $80$ genes.  
For the quadratic statistic we expected our algorithm
to cost as much as doing a number of permutations
equal to a small multiple of the mean square gene set size.
It cost about as much as $35{,}000$ to $45{,}000$ permutations
while the mean square set size was $27{,}171$.

After applying our permutation approximation methods to each
dataset in $6{,}303$ mSigDB
gene sets, we found many significantly enriched gene sets, even after correcting for
multiple testing (two-sided adjusted $p$-value $<$ 0.05).    
The most significantly enriched
sets are associated with metabolism and mitochondrial function, 
neuronal transmitters and serotonin, epigenetic modifications, and the transcription factor FOXP3
Supplemental Table 1\footnote{
\url{http://statweb.stanford.edu/~owen/reports/SupplementalTable1.xls}}
%%%maybe add 1000, gold standard, and corresponding adjusted pvalues to this table?
Each of these categories has some previously discovered association with PD,
although not through traditional gene set methods 
(metabolism and mitochondrial function: \cite{abousleiman:2006}; 
neuronal transmitters and serotonin: \cite{fox:2009}; 
epigenetic modifications: \cite{berthier:2013}; FOXP3: \cite{stone:2009}).  Through our 
new gene set enrichment method, we discovered a relationship between 
the expression of these gene sets and PD.

\section{Discussion}\label{sec:discussion}
Gene set methods are able to pool weak single gene signals over a set of genes to get
a stronger inference.  These methods and their corresponding permutation-based 
inferences are a staple of high throughput methods in genomics.  Because an experiment 
for this purpose may have a few to hundreds of microarrays or RNA-seq samples,
permutation can be computationally costly, and yet still result in granular $p$-values.  
In this paper, we introduce an approximation gene set method, which performs as well 
as permutation methods, in a fraction of the computation time and 
%with the potential for 
which generates   % not just potentially, but actually!
continuous $p$-values.

Permutation methods have some valuable properties
that our approach does not share.
Permutation based inferences give exact $p$-values.
Our approximations are not ordinarily exact because
the permutation histogram is not in the parametric family
we use. 

The second advantage of permutations is that they apply to 
arbitrarily complicated statistics.  In our view, many of those complicated
statistics are much harder to interpret and are less intuitive 
than the plain sum and sum of
squared statistics we present.  Others have observed that simple linear and 
squared statistics outperform more complex approaches \citep{acke:stri:2009}.  
Our method allows for the weighting 
of coefficients in our statistics, granting users access to additional
useful and interpretable patterns.

%Another weakness of using random permutations is that
%they give random inferences. They
%are thus not reproducible without the cumbersome maintenance of
%random seeds.
%Furthermore, with only modest numbers of permutations, exact
%$p$-values are quite noisy and can often result in ties. 

Because of the disadvantages discussed above, there
has long been interest in finding approximations to permutation tests.
\cite{eden:yate:1933} noticed that the permutation distribution
closely matched a parametric distribution that one would get
running an $F$-test on the same data.  It has also been known since the 1940s that the permutation
distribution of the linear test is asymptotically normal
as $n$ increases~\citep{good:2004}.  More recently, \cite{knij:wess:rein:shmu:2009} approach 
the granularity
issue by taking a random sample of permutations and fitting a generalized
extreme value (GEV) distribution to the tail of their distribution.
%They use several thousand permutations, and report better ordering
%of gene sets using their fits than using ordinary randomization.

Our work differs from these previous permutation approximation approaches. 
We use Gaussian or beta distributions
for the linear statistic and a $\chi^2$ distribution for the quadratic statistic. These
choices never place the observed test statistic strictly outside the possible
range of our reference distribution. In this way, we also avoid nonsensical $p$-values.
%Fitting by moments in the Pearson family 
%as~\cite{zhou:wang:wang:2009} and \cite{berr:miel:wong:1986} do, 
%it is possible to find beta approximations whose support
%$(A,B)$ does not include all possible permutations, resulting in $p<0$.
%The same thing happens with the GEV distribution
%when the fitted shape parameter is positive.
%\cite{knij:wess:rein:shmu:2009} report that the observed test statistic may
%be larger than the maximum of their fitted GEV distribution.  
%They find that the problem
%is reduced (though perhaps not eliminated) 
%by working with either the cube or the fifth power of the test
%statistic.

We have developed 
a new and intuitive method for gene set enrichment analysis that is computationally 
inexpensive, as accurate as permutation methods, and avoids the sample
granularity issue.  A Gaussian, beta, or $\chi^2$ approximation gives a principled way
to break ties among genes or gene sets whose test statistics
are larger than any seen in the $M$ permutations.   We applied our moment based 
approximations to three human Parkinson's Disease data sets and discovered the 
enrichment of several gene sets in this disease, none of which were  
mentioned in the original publications.

%%%%%%%%%%%%%%%%%%%%%%%%%%%%%%%%%%%%%%%%%%%%%%%%%%%%%%%%%%%%%%%%%%%%%%%%%%%%%%%%%%%%%
%
%     please remove the " % " symbol from \centerline{\includegraphics{fig01.eps}}
%     as it may ignore the figures.
%
%%%%%%%%%%%%%%%%%%%%%%%%%%%%%%%%%%%%%%%%%%%%%%%%%%%%%%%%%%%%%%%%%%%%%%%%%%%%%%%%%%%%%%

%\section{Conclusion}

\section*{Acknowledgement}
We thank Nicholas Lewin-Koh, Joshua Kaminker,
Richard Bourgon, Sarah Kummerfeld, Thomas Sandmann,
and John Robinson for helpful comments.
ABO thanks Robert Gentleman, Jennifer Kesler and other
members of the Bioinformatics and Computational Biology Department 
at Genentech for their hospitality during his sabbatical there.

\paragraph{Funding:}
JLL is funded by Genentech, Inc.
ABO was supported by Genentech, Inc.\ and by Stanford
University while on a sabbatical.

\bibliographystyle{apalike}
\bibliography{gsea}

\newcommand{\lemmomentone}{1} % a ref from main article
\newcommand{\lemmomenttwo}{2} % a ref from main article
\newcommand{\lemrotateone}{3} 
\newcommand{\numberofpreviouslemmas}{3} % for setcounter below

\renewcommand{\le}{\leqslant}
\renewcommand{\ge}{\geqslant}

\newcommand{\dustd}{\mathsf{U}}
\newcommand{\diag}{\mathrm{diag}}
\newcommand{\tr}{\mathrm{tr}}

\subsection*{Appendix 1: Proof of Lemma~\lemmomentone}
This appears in~\cite{owen:2005} but we prove it here to keep
the paper self-contained.
First
\begin{align*}
n^2\e(\wt\beta_g\wt\beta_h) = 
\sum_i\sum_{i'}X_{gi}X_{hi'}\e(\wt Y_i\wt Y_{i'})
\end{align*}
Recall that $\mu_2 =\frac1n\sum_{i=1}^n Y_i^2$.
Then
$$\e(\wtyi\wtyip) =
\begin{cases}
\mu_2,& i'=i\\
-\dfrac1{n-1}\mu_2, & i'\ne i
\end{cases}
$$
and so
\begin{align*}
n^2\e(\wt\beta_g\wt\beta_h) &= 
\sum_i\sum_{i'}X_{gi}X_{hi'}\e(\wt Y_i\wt Y_{i'})\\
& = 
\mu_2\sum_i\sum_{i'}X_{gi}X_{hi'}
\Bigl( 1_{i=i'} - \frac1{n-1}1_{i\ne i'}\Bigr)\\
& = 
\mu_2\sum_i\sum_{i'}X_{gi}X_{hi'}
\Bigl( \frac{n}{n-1}1_{i=i'} - \frac1{n-1}\Bigr)\\
& = \frac{n}{n-1}\mu_2\sum_iX_{gi}X_{hi}\\
& \equiv \frac{n^2}{n-1}\mu_2\bar X_{gh},
\end{align*}
proving Lemma~\lemmomentone.\ $\Box$
%\end{proof}

\subsection*{Appendix 2: Proof of Lemma~\lemmomenttwo}

The fourth moment contains terms of the form
$$X_{gi}X_{hj}X_{rk}X_{s\ell}\e( \wtyi\wtyj\wtyk\wtyl)$$
and there are different special cases depending on 
which pairs of indices among $i$, $j$, $k$ and $\ell$
are equal.
We need the following fourth moments of $Y$
in which all indices are distinct:
\begin{align*}
\mu_{4k} &= \e( \wt Y_i^4 )\\
\mu_{3k} & = \e( \wt Y_i^3\wt Y_j)\\
\mu_{2p} & = \e( \wt Y_i^2\wt Y_j^2)\\
\mu_{1p} & = \e( \wt Y_i^2\wt Y_j\wt Y_k)\\
\mu_\emptyset & = \e( \wt Y_i\wt Y_j\wt Y_k\wtyl),
\end{align*}
and where the subscripts are mnemonics for terms
four of a kind, three of a kind, two pair, one pair
and nothing special.

We can express all of these moments in terms
of $\mu_2$ and $\mu_4 = (1/n)\sum_{i=1}^nY_i^4$.
Each moment is a normalized sum over distinct
indices. We can write these in terms of normalized sums
over all indices. Many of those terms vanish because
$\sum_iY_i=0$.

Let $\sum^*$ represent summation over distinct
indices,  as in
\begin{align*}
\sums_{ij} f_{ij}
&= \sum_{i=1}^n\sum_{j=1,j\ne i}^n f_{ij},\\
\sums_{ijk} f_{ijk}& = \sum_{i=1}^n\sum_{j=1,j\ne i}^n
\sum_{k=1,k\ne i,k\ne j} f_{ijk}
\end{align*}
and so on. We can write these sums in terms of 
unrestricted sums:
\begin{align*}
\sums_{ij} f_{ij} & =  \sum_{ij}f_{ij} - \sum_i f_{ii}\\
\sums_{ijk} f_{ijk} & = \sum_{ijk}f_{ijk}
 - \sum_{ij}(f_{iij}+f_{iji}+f_{ijj}) + 2\sum_if_{iii},\quad\text{and}\\
\sums_{ijk\ell}f_{ijk\ell}
& = \sum_{ijk\ell}f_{ijk\ell}-\sum_{ijk}\Bigl( 
f_{ijki} + f_{ijkj} + f_{ijkk} + f_{ijik} + f_{ijjk} + f_{iijk}
\Bigr)\\
& \pe
+\sum_{ij}\Bigl( 
2( f_{ijjj} + f_{ijii} + f_{iiji} + f_{iiij})
+ f_{ijij} + f_{ijji} + f_{iijj} \Bigr)
- 6\sum_i f_{iiii}.
\end{align*} 
See Gleich and Owen (2011) for details.

We will use the last expression in a context
where $f_{ijk\ell}$ vanishes when summed over 
the entire range of any one of its indices.
In that case
\begin{align}\label{eq:fourhom}
%\sum_{ij}{}^* f_{ij} & =  - \sum_i f_{ii}\\
%\sum_{ijk}{}^* f_{ijk} & = 
%2\sum_if_{iii},\quad\text{and}\\
\sums_{ijk\ell}f_{ijk\ell}
& = 
\sum_{ij}\Bigl(
f_{ijij} + f_{ijji} + f_{iijj} \Bigr)- 6\sum_i f_{iiii}.
\end{align}
We also use the notation $n^{(k)} = n(n-1)(n-2)\cdots(n-k+1)$,
often called `$n$ to $k$ factors', where $k$ is a positive integer.
Now
\begin{align*}
\mu_{4k} &= \frac1n\sum_{i=1}^nY_i^4 = \mu_4,\\
\mu_{3k} & = \frac1{n^{(2)}} \sums_{ij} Y_i^3Y_j
= \frac1{n^{(2)}} \biggl(\sum_{ij} Y_i^3Y_j
-\sum_iY_i^4\biggr)\\
& = -\frac{\mu_4}{n-1},\\
\mu_{2p} & = \frac1{n^{(2)}}\sums_{ij}Y_i^2Y_j^2
= \frac1{n^{(2)}}\biggl(\sum_{ij}Y_i^2Y_j^2
-\sum_i Y_i^4\biggr)\\
&%=\frac{n}{n-1}\mu_2^2 -\frac1{n-1}\mu_4
= \frac1{n-1}\bigl(n\mu_2^2-\mu_4\bigr),\quad\text{and}\\
\mu_{1p} &= \frac1{n^{(3)}}\sums_{ijk}Y_i^2Y_jY_k\\
&= \frac1{n^{(3)}}\biggl(
\sum_{ijk} Y_i^2Y_jY_k
-\sum_{ij}\bigl( 2Y_i^3Y_j + Y_i^2Y_j^2\bigr) +2\sum_iY_i^4
\biggr)\\
& = 
\frac{ -n\mu_2^2+2\mu_4}{(n-1)(n-2)}.
\end{align*}

Finally using~\eqref{eq:fourhom}, $n^{(4)}\mu_\emptyset$ equals
\begin{align*}
\sums_{ijk\ell}Y_iY_jY_kY_\ell
& = 3\sum_{ij} Y_i^2Y_j^2 - 6\sum_iY_i^4
= 3n^2\mu_2^2-6n\mu_4
\end{align*}
so that
$$
\mu_\emptyset =  \frac1{(n-1)(n-2)(n-3)}\bigl( 3n\mu_2^2-6\mu_4\bigr).
$$
We may summarize these results via
\begin{align*}%\label{eq:muytomuhands}
\begin{pmatrix}
\mu_{4k}\\
\mu_{3k}\\
\mu_{2p}\\
\mu_{1p}\\
\mu_\emptyset
\end{pmatrix}
=
A\begin{pmatrix}
\mu_2^2\\
\mu_4
\end{pmatrix},
\end{align*}
where the matrix $A$ is given in the statement of Lemma~\lemmomenttwo.

Now
\begin{align*}
n^4\e(\wt\beta_g\wt\beta_h\wt\beta_r\wt\beta_s)
& =
\sum_{ijk\ell} X_{gi}X_{hj}X_{rk}X_{s\ell}\e( \wtyi\wtyj\wtyk\wtyl )\\
& = \mu_{4k}\sum_i X_{gi}X_{hi}X_{ri}X_{si}\\
&\pe+\mu_{3k}\sums_{ij}\bigl(
 X_{gi}X_{hi}X_{ri}X_{sj}
+X_{gi}X_{hi}X_{rj}X_{si}
+X_{gi}X_{hj}X_{ri}X_{si}
+X_{gj}X_{hi}X_{ri}X_{si}
\bigr)\\
&\pe+\mu_{2p}
\sums_{ij}\bigl( 
 X_{gi}X_{hi}X_{rj}X_{sj}
+X_{gi}X_{hj}X_{ri}X_{sj}
+X_{gi}X_{hj}X_{rj}X_{si}\bigr)\\
&\pe +\mu_{1p}
\sums_{ijk}\bigl(
  X_{gi}X_{hi}X_{rj}X_{sk}
+ X_{gi}X_{hj}X_{ri}X_{sk}
+ X_{gi}X_{hj}X_{rk}X_{si}\\
&\pe\phantom{\mu_{1p}\sums_{ijk}}
+ X_{gi}X_{hj}X_{rj}X_{sk}
+ X_{gi}X_{hj}X_{rk}X_{sj}
+ X_{gi}X_{hj}X_{rk}X_{sk}
\bigr)\\
&\pe+\mu_\emptyset\sums X_{gi}X_{hj}X_{rk}X_{s\ell}.
\end{align*}
Next, we write the terms of $n^4\e(\wt\beta_g\wt\beta_h\wt\beta_r\wt\beta_s)$
using $\bar X_{ghrs}$ and similar moments.

The coefficient of $\mu_{4k}$ is $\sum_iX_{gi}X_{hi}X_{ri}X_{si} = 
n\bar X_{ghrs}$.
The coefficient of $\mu_{3k}$ contains
$$
\sums_{ij} X_{gi}X_{hi}X_{ri}X_{sj} = 
\sum_{ij} X_{gi}X_{hi}X_{ri}X_{sj} 
- \sum_{i} X_{gi}X_{hi}X_{ri}X_{si}  
= -n\bar X_{ghrs}
$$
and after summing all four such terms, the coefficient is
$-4n\bar X_{ghrs}$.
The coefficient of $\mu_{2p}$ contains
\begin{align*}
\sums_{ij}
 X_{gi}X_{hi}X_{rj}X_{sj}
= \sum_{ij} X_{gi}X_{hi}X_{rj}X_{sj}
-\sum_i  X_{gi}X_{hi}X_{ri}X_{si}
= -n\bar X_{ghrs}
\end{align*}
and accounting for all three terms yields $-3n\bar X_{ghrs}$.

The coefficient of $\mu_{1p}$ contains
\begin{align*}
\sums_{ijk}X_{gi}X_{hi}X_{rj}X_{sk} 
& =
\sum_{ijk}X_{gi}X_{hi}X_{rj}X_{sk}  
-\sum_{ij}X_{gi}X_{hi}X_{ri}X_{sj}  \\
&\pe
-\sum_{ik}X_{gi}X_{hi}X_{rj}X_{si} 
-\sum_{jk}X_{gi}X_{hi}X_{rj}X_{sj}  
+2\sum_i X_{gi}X_{hi}X_{ri}X_{si} \\
& = -n^2\bar X_{gh}\bar X_{rs} +2n\bar X_{ghrs}.
\end{align*}
Summing all $6$ terms, we find that the coefficient is
$$
-2n^2(
 \bar X_{gh}\bar X_{rs}
+\bar X_{gr}\bar X_{hs}
+\bar X_{gs}\bar X_{hr})
+12n\bar X_{ghrs}.
$$

The coefficient of $\mu_\emptyset$ is, using~\eqref{eq:fourhom},
\begin{align*}
\sums_{ijk\ell}X_{gi}X_{hj}X_{rk}X_{s\ell}
& = 
\sum_{ij}\Bigl(
X_{gi}X_{hj}X_{ri}X_{sj}
+X_{gi}X_{hj}X_{rj}X_{si}
+X_{gi}X_{hi}X_{rj}X_{sj}
\Bigr) \\
&\pe- 6\sum_iX_{gi}X_{hi}X_{ri}X_{si}\\
& = n^2\bigl( \bar X_{gh}\bar X_{rs}+\bar X_{gr}\bar X_{hs}+\bar X_{gs}\bar X_{hr}\bigr)-6n\bar X_{ghrs}.
\end{align*}

We may summarize these results via
\begin{align*}
\e(\wt\beta_g\wt\beta_h\wt\beta_r\wt\beta_s)
=
\begin{pmatrix}
\mu_{4k}\\
\mu_{3k}\\
\mu_{2p}\\
\mu_{1p}\\
\mu_\emptyset
\end{pmatrix}^\tran
B
\begin{pmatrix}
\bar X^*_{ghrs}/n^2\\[1ex]
\bar X_{ghrs}/n^3
\end{pmatrix},
\quad\text{for}\quad
B = 
\begin{pmatrix}
\pn0 & \pn1\\
\pn0 & -4\\
\pn1 & -3\\
-2   & \,12\\
\pn1 & -6
\end{pmatrix},
\end{align*}
where $\bar X^*_{gh,rs}  = 
\bar X_{gh}\bar X_{rs}+\bar X_{gr}\bar X_{hs}+\bar X_{gs}\bar X_{hr}$,
completing the proof of Lemma~\lemmomenttwo.

\subsection*{Appendix 3: moments of orthogonal random matrix elements.}

We will need low order moments of orthogonal random matrices
to study the moments of linear and quadratic test statistics
under rotation sampling.

For integers $n\ge k\ge 1$, 
let $V_{n,k} = \{ Q\in\real^{n\times k}\mid Q^\tran Q = I_k\}$,
known as the Stiefel manifold.
We will make use of the uniform distributions on $V_{n,k}$.
There is a natural identification of $V_{n,1}$ with the
unit sphere.

Let $Q\in\real^{n\times n}$ be a uniform random rotation matrix.
This implies, among other things, that each column of $Q$ is
a uniform random point on the unit sphere in $n$ dimensions.

By symmetry, we find that $\e( Q_{ij} ) = 0$.
Similarly $\e( Q_{ij}^2 ) = \e( (1/n)\sum_{j=1}^n Q_{ij}^2)=1/n$
and $\e( Q_{ij}Q_{rs} ) =0$ unless $i=r$ and $j=s$.

\cite{ande:olki:unde:1987}
give 
\begin{align}\label{eq:4th}
\e( Q_{ij}^4) = \frac3{n(n+2)}. 
\end{align}
We are interested in all fourth moments
$\e( Q_{ij}Q_{k\ell}Q_{rs}Q_{tu})$ of $Q$.
If any of $j,\ell,s,u$ appears exactly once
then the fourth moment is $0$ by symmetry. 
To see this, suppose that index $\ell$
appears exactly once.
Now define the matrix $\wt Q$ with elements
$$
\wt Q_{ij} = \begin{cases}
-Q_{ij} & j=\ell,\\
Q_{ij} & j\ne \ell.
\end{cases}
$$
If $Q\sim\dustd(V_{n,n})$ then $\wt Q\sim\dustd(V_{n,n})$ too
by invariance of $\dustd(V_{n,n})$ to multiplication on the
right by the orthogonal matrix $\diag(1,1,\dots,1,-1,1,\dots,1)$,
with a $-1$ in the $j'$th position.
Then
\begin{align*}
\e( Q_{ij}Q_{k\ell}Q_{rs}Q_{tu} )
&= \frac12\e\bigl(
Q_{ij}Q_{k\ell}Q_{rs}Q_{tu} + \wt Q_{ij}\wt Q_{k\ell}\wt Q_{rs}\wt Q_{tu}\bigr)\\
&= \frac12\e\bigl(
Q_{ij}Q_{k\ell}Q_{rs}Q_{tu} + Q_{ij}(-Q_{k\ell})Q_{rs}Q_{tu}\bigr)\\
&=0.
\end{align*}
Similarly, because $Q^\tran$ is also uniformly distributed
on $V_{n,n}$ we find that if any of $i,k,r,t$ appear
exactly once the moment is zero. If one index appears exactly
three times, then some other moment must appear exactly once.
As a result, the only nonzero fourth moments are products of
squares and pure fourth moments.
Their values are given in the Lemma below.

\begin{lemma}
Let $Q\sim\dustd(V_{n,n})$. Then
$$
\e( Q_{ij}^2Q_{rs}^2 ) =
\begin{cases}
\dfrac{3}{n(n+2)}, & i=r\ \&\ j=s\\[2ex]
\dfrac{1}{n(n+2)}, & 1_{i=r} + 1_{j=s} = 1\\[2ex]
\dfrac{n+1}{n(n-1)(n+2)}, & i\ne r\ \&\ j\ne s.
\end{cases}
$$
\end{lemma}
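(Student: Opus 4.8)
The plan is to compute the three cases of $\e(Q_{ij}^2Q_{rs}^2)$ by exploiting the fact, already established in the text, that each column (and each row) of $Q$ is a uniform random point on $S^{n-1}$, together with the single known value $\e(Q_{ij}^4)=3/(n(n+2))$ from~\eqref{eq:4th}. The case $i=r$, $j=s$ is immediate: it is just $\e(Q_{ij}^4)$, giving $3/(n(n+2))$.

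For the case $1_{i=r}+1_{j=s}=1$, say $j=s$ but $i\ne r$, I would fix attention on column $j$, which is a uniform unit vector $u\in\real^n$, and use the identity $\sum_{i=1}^n u_i^2=1$. Squaring gives $1=\sum_i u_i^4 + \sum_{i\ne r}u_i^2u_r^2$. Taking expectations and using exchangeability of the coordinates, $1 = n\,\e(u_1^4) + n(n-1)\,\e(u_1^2u_2^2)$, so $\e(u_1^2u_2^2) = (1-3/(n+2))/(n(n-1)) = 1/(n(n+2))$, which is the claimed value. The symmetric subcase $i=r$, $j\ne s$ follows identically by applying the same argument to row $i$ of $Q$ (rows are uniform on $S^{n-1}$ too).

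For the last case, $i\ne r$ and $j\ne s$, I would again work within a single column, column $j$, writing $\e(Q_{ij}^2Q_{rs}^2)$ — but now $Q_{rs}$ lives in a different column. The clean route is to condition on column $j$ being $u$: given $u$, the remaining columns are uniform on the unit sphere of the orthogonal complement $u^\perp$, an $(n-1)$-dimensional space. Within that space the second-column argument of the previous paragraph applies, so conditionally $\e(Q_{rs}^2\mid u)$ is $1/(n-1)$ times the squared length of the projection of $e_r$ onto $u^\perp$, namely $(1-u_r^2)/(n-1)$. Hence $\e(Q_{ij}^2Q_{rs}^2) = \e\bigl(u_i^2(1-u_r^2)/(n-1)\bigr) = \bigl(\e(u_i^2)-\e(u_i^2u_r^2)\bigr)/(n-1) = \bigl(1/n - 1/(n(n+2))\bigr)/(n-1) = (n+1)/(n(n-1)(n+2))$, as required.

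The main obstacle is getting the conditional distribution step right: one must check that, conditioned on column $j$, the other columns are genuinely uniform on the unit sphere of $u^\perp$, and that ``project $e_r$ onto that sphere's ambient space, then take the uniform second-moment'' is the correct reduction. This is a standard consequence of the orthogonal invariance of $\dustd(V_{n,n})$ (the stabilizer of $u$ acts as $O(n-1)$ on the remaining columns), but it is the only place where more than a one-line symmetry argument is needed; everything else is algebra driven by $\|u\|^2=1$ and the known fourth moment~\eqref{eq:4th}. An alternative, if one prefers to avoid conditioning, is to use the row-normalization $\sum_s Q_{rs}^2=1$ in tandem with the column identity and solve the resulting small linear system for the three unknown second moments simultaneously, using~\eqref{eq:4th} to close it.
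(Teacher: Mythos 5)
Your proof is correct, but it takes a genuinely different route: you essentially swap the two techniques the paper uses. For the case $1_{i=r}+1_{j=s}=1$ the paper conditions within a column (given $Q_{11}$, the rest of that column is uniform on a sphere of radius $\sqrt{1-Q_{11}^2}$, so $\e(Q_{21}^2\mid Q_{11})=(1-Q_{11}^2)/(n-1)$), whereas you square the column normalization $\sum_i u_i^2=1$, take expectations, and use exchangeability---a purely algebraic step with no conditioning. For the case $i\ne r$, $j\ne s$ the paper does the opposite of you: it never conditions one column on another, but instead uses the deterministic identity $\sum_{i,j}\sum_{r,s}Q_{ij}^2Q_{rs}^2=\bigl(\sum_{i,j}Q_{ij}^2\bigr)^2=n^2$, takes expectations, counts the three types of terms, and solves the resulting linear equation for the unknown moment. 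You instead condition on column $j$ equalling $u$ and use that any other column is then uniform on the unit sphere of $u^\perp$, giving $\e(Q_{rs}^2\mid u)=(1-u_r^2)/(n-1)$, and finish with the case-two moment. Both arguments are valid and yield the same values; the paper's case-three device is more elementary, needing only the marginal sphere-uniformity of one column plus orthonormality of $Q$, while your conditioning step requires justifying the conditional law of the remaining columns given one column (orthogonal invariance, i.e.\ the stabilizer of $u$ acting as an $(n-1)$-dimensional orthogonal group), which you correctly flag as the one nontrivial point---in exchange you get a direct computation with no term counting. Your closing alternative, closing the system via row and column normalizations, is in fact essentially the paper's own case-three argument.
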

\begin{proof}
The first case was given by \cite{ande:olki:unde:1987}.

For the second case, there is no loss of generality
in computing $\e( Q_{11}^2Q_{21}^2 )$.
The vector $(Q_{11},Q_{21},\dots,Q_{n1})$ is uniformly distributed
on the sphere. Given $Q_{11}$, the point
$(Q_{21},Q_{31},\dots,Q_{n1})$ is uniformly distributed on
the $n-1$ dimensional sphere of radius $\sqrt{1-Q_{11}^2}$.
Therefore $\e( Q_{21}^2 \mid Q_{11} ) = (1-Q_{11}^2)/(n-1)$
and so
\begin{align*}
\e(Q_{11}^2Q_{21}^2) &= \frac1{n-1}\e( Q_{11}^2-Q_{11}^4)
=\frac1{n-1}\left( \frac1n - \frac3{n(n+2)}\right)
=\frac1{n(n+2)}.
\end{align*}

For the remaining case we let
$\theta = \e( Q_{ij}^2Q_{rs}^2)$
for $i\ne r$ and $j\ne s$.
Summing over $n^4$ combinations of indices we
find that 
$$\sum_{i=1}^n\sum_{j=1}^n\sum_{r=1}^n\sum_{s=1}^nQ_{ij}^2Q_{rs}^2
=\biggl( \sum_{ij}Q_{ij}^2\biggr)^2 = n^2
$$
by orthogonality of $Q$.
Therefore
$$
n^2 = \e\biggl( \sum_{ij}\sum_{rs}Q_{ij}^2Q_{rs}^2 \biggr)
= n^2 \e( Q_{11}^4 )
+ 2n^2(n-1)\e( Q_{11}^2Q_{12}^2 )
+n^2(n-1)^2\theta.
$$
Solving for $\theta$ we get
\begin{align*}
\theta = \dfrac
{n^2- \frac{3n}{n+2} - \frac{2n(n-1)}{n+2}}
{n^2(n-1)^2}
= \frac{n+1}{n(n-1)(n+2)}.\qquad\qedhere
\end{align*}
\end{proof}

\subsection*{Appendix 4: proof of Lemma~\lemrotateone.}

Let $X_i\in \real^p$ where $p=|G|$ and $Y_i\in\real$ for 
$i=1,\dots,n$.
Both $X_i$ and $Y_i$ are centered:
$\sum_iX_i=0$ and $\sum_i Y_i=0$.

The sample coefficients for genes $g\in G$ are given by the 
vector $\hat\beta = L=(1/n)\sum_iX_iY_i$.
The reference distribution is formed by sampling
values of $\wt\beta = (1/n)\sum_iX_i\wt Y_i$ where $\wt Y$ is a rotated
version of $Y$.

The rotation is one that preserves the mean of $Y$
while rotating in the $n-1$ dimensional space of contrasts.
As in \cite{lang:2005}, we
let $W\in\real^{n\times(n-1)}$ be any fixed contrast matrix
satisfying $W^\tran W=I_{n-1}$ and $W^\tran 1_n=0_{n-1}$.
Then the rotated version of $Y$ is
$$
\wt Y =  W Q W^\tran Y,\quad\text{where}\quad Q\sim\dustd(V_{n-1,n-1})$$
is a uniform random $n-1$ dimensional rotation matrix.

It is convenient to introduce centered
quantities
$X^c = W^\tran X\in\real^{(n-1)\times p}$,
$Y^c = W^\tran Y\in\real^{n-1}$
and 
$\wt Y^c = W^\tran \wt Y\in\real^{n-1}$.
These sum to zero even when $X$, $Y$  and $\wt Y$ do not.
Their main difference from those variables is that they
have $n-1$ rows, not $n$.

Now $\wt\beta = (1/n)X^\tran \wt Y
=(1/n) X^\tran WQW^\tran Y=(1/n){X^c}^\tran Q Y^c$, so
\begin{align*}
\e( \wt\beta ) = (1/n){X^c}^\tran\e(Q){Y^c}^\tran = 0.
\end{align*}
For the rest of the proof, we
need the covariance matrix of $\wt \beta$.
Now
\begin{align}
\e( \wt\beta\wt\beta^\tran )
&=\frac1{n^2}
{X^c}^\tran \e\bigl( Q^\tran Y^c{Y^c}^\tran Q\bigr){X^c}^\tran
 = \frac1{n^2} {X^c}^\tran\e\bigl( Q^\tran Z Q\bigr)X^c\notag
\end{align}
where $Z = Y^c{Y^c}^\tran \in \real^{(n-1)\times(n-1)}$.

The $ij$ element of $Q^\tran ZQ$ is
$
(Q^\tran ZQ)_{ij} 
%& = \sum_k (Q^\tran)_{ik}(SQ)_{kj}\\
%& = \sum_k \sum_\ell (Q^\tran)_{ik} S_{k\ell}Q_{\ell j}\\
 = \sum_{k=1}^{n-1} \sum_{\ell=1}^{n-1} Z_{k\ell} Q_{ki} Q_{\ell j}
$
which has expected value
$$
\sum_{k=1}^{n-1} \sum_{\ell=1}^{n-1} Z_{k\ell} 1_{k=\ell}1_{i=j}/(n-1)
=\frac{1_{i=j}}{n-1}\sum_{k=1}^{n-1}Z_{kk} = 1_{i=j}\frac{n}{n-1}\mu_2
$$
where $\mu_2 = (1/n)\sum_{i=1}^nY_i^2 = (1/n)\sum_{i=1}^n{Y^c_i}^2$.
That is
$$
\e( Q^\tran ZQ) = \frac{n\mu_2}{n-1}I_{n-1}
$$
and so
$$
\e(\wt\beta\wt\beta^\tran) = \frac{\mu_2}{n(n-1)}{X^c}^\tran X^c. 
$$
In particular $\e(\wt\beta_g\wt\beta_h) = \e(\wt\beta\wt\beta^\tran)_{gh}
= \bar X_{gh}\mu_2/(n-1)$, matching the value under permutation.

\subsection*{Appendix 5: cost analysis of $\var(\wt C_{G,w})$}

Recall from Corollary 2 that
in an experiment with $n\ge 4$ and genes $g,h$,
\begin{align*}
\cov(\wt\beta^2_g,\wt\beta^2_h) 
%&= 
&=
\begin{pmatrix} \mu_2^2\\\mu_4\end{pmatrix}^\tran 
A^\tran B
\begin{pmatrix}
\bar X^*_{gghh}
/n^2\\[1ex]
\bar X_{gghh}/n^3
\end{pmatrix} -\frac{\mu_2^2}{(n-1)^2}\bar X_{gg}\bar X_{hh},
\end{align*}
where $\bar X^*_{gghh} = \bar X_{gg}\bar X_{hh} + 2\bar X_{gh}^2$
and $A^\tran B$ is a given $2\times 2$ matrix.

To compute
\begin{align*}
\var( \wt C_{G,w}) = \sum_{g\in G}\sum_{h\in G}
w_gw_h\cov(\wt \beta_g^2,\wt\beta_h^2)
\end{align*}
we need $\mu_2$, $\mu_4$ and $A^\tran B$ which are
very inexpensive. We also need
\begin{align*}
S_1\equiv \sum_{g\in G}\sum_{h\in G}w_gw_h\bar X_{gg}\bar X_{hh}
= \Biggl(\, \sum_{g\in G}w_g\bar X_{gg}\Biggr)^2.
\end{align*}
By expressing $S_1$ as a square, we
find that it can be computed in $O(np)$ work,
not $O(np^2)$ which a naive implementation would provide.
We can compute all of the $\bar X_{gg}$'s
in $np$ multiplications and this is the largest part of the cost.
If gene $g$ belongs to many gene sets $G$ we only
need to compute $\bar X_{gg}$ once and so the cost per additional
gene set could be lower.

A similar analysis yields that
$$
S_2\equiv\sum_{g\in G}\sum_{h\in G}w_gw_h\bar X_{gghh}
= \frac1n\sum_{i=1}^n \Biggl(\, \sum_{g\in G}w_gX_{gi}^2\Biggr)^2
$$
is also an $O(np)$ computation. Unfortunately
$S_3 \equiv \sum_{g\in G}\sum_{h\in G}\bar X_{gh}^2$
does not reduce to an $O(np)$ computation.
As written it costs $O(np^2)$.
In cases where $p>n$, we can however reduce
the cost to $O(n^2p)$ via
\begin{align*}
S_3
& = \sum_{g\in G}\sum_{h\in G}w_gw_h\Biggl(\,\frac1{n}\sum_{i=1}^nX_{gi}X_{hi}\Biggr)^2
 = \frac1{n^2}\sum_{g\in G}\sum_{h\in G}w_gw_h
\sum_{i=1}^n
X_{gi}X_{hj}\sum_{j=1}^nX_{gj}X_{hj}\\
& = 
\frac1{n^2}
\sum_{i=1}^n\sum_{j=1}^n
\Biggl(\,\sum_{g\in G}w_gX_{gi}\Biggr)^2.
\end{align*}

In terms of these sum quantities,
\begin{align*}
\var( \wt C_{G,w}) 
&=
\begin{pmatrix} \mu_2^2\\\mu_4\end{pmatrix}^\tran 
A^\tran B
\begin{pmatrix}
(S_1 + 2S_3)/n^3\\[1ex]
S_2/n^3
\end{pmatrix} -\frac{\mu_2^2}{(n-1)^2}S_1.
\end{align*}

%\bibliographystyle{apalike}
%\bibliography{gsea}
\end{document}